\definecolor{Blue}{rgb}{0.1,0.1,0.9}
\definecolor{Red}{rgb}{0.9,0.1,0.1}
\newcommand{\overarrow}[1]{\stackrel{#1}{\rightarrow}}
\DeclareRobustCommand\varname[1]{%
\ifmmode
  \begingroup %
  \let\math@bgroup\@empty \let\math@egroup\@empty \mathit\@empty #1\endgroup
\else
  \textit{#1}%
\fi}
\newcommand{\raw}{\rightarrow}
\newcommand{\Tc}{{T}}
\newcommand{\pos}{{\cP}os}
\newcommand*{\Symbols}{{\Sigma}}
\newcommand*{\Variables}{\cV}
\newcommand*{\Var}[1]{\mathit{Var}{#1}}
\newcommand*{\cO}{{\cal O}}
\def\defemb#1#2{\expandafter\def\csname #1\endcsname
                              {\relax\ifmmode #2\else\hbox{$#2$}\fi}}
\def\ll{[\![}
\def\rr{]\!]}
\def\Den#1{\relax\ifmmode \ll #1\rr \else\hbox{$\ll #1\rr$}\fi}
\def\#{\hat{~}}
\DeclareSymbolFont{boldsymbols}{OMS}{cmsy}{b}{n}
\DeclareSymbolFontAlphabet{\mathbfcal}{boldsymbols}
\long\def\comment#1{}
\newcommand{\startprog}{\begin{prog}}
\newcommand{\stopprog}{\end{prog}\noindent}
\begin{document}

\pagestyle{headings}

\title{Dynamic Backward Slicing of \\ Rewriting Logic Computations\thanks{This work has been partially supported by the EU (FEDER) and the Spanish MEC  TIN2010-21062-C02-02 project, 
by Generalitat Valenciana PROMETEO2011/052,
and by the Italian MUR under grant RBIN04M8S8, FIRB project, Internationalization 2004.
Daniel Romero is also supported by FPI--MEC grant BES--2008--004860.
}}\titlerunning{Dynamic Backward Slicing of Rewriting Logic Computations}  %

\author{M. Alpuente\inst{1} \and D. Ballis\inst{2} \and J. Espert\inst{1} \and D. Romero\inst{1}}
\authorrunning{M. Alpuente, D. Ballis, J. Espert, and D. Romero} %

\institute{
DSIC-ELP, Universidad Polit\'ecnica de Valencia\\
Camino de Vera s/n, Apdo 22012, 46071 Valencia, Spain\\
\email{\{alpuente,jespert,dromero\}@dsic.upv.es}
\and
Dipartimento di Matematica e Informatica\\
Via delle Scienze 206, 33100 Udine, Italy\\
\email{demis.ballis@uniud.it}
}

\maketitle

\begin{abstract}

Trace slicing is a widely used technique for execution trace 
analysis that is effectively used in program debugging, analysis and comprehension. 
In this paper, we present a 
backward trace slicing technique that can be used  for the analysis of Rewriting Logic theories.

Our trace slicing technique allows us to systematically 
trace back rewrite sequences modulo equational axioms (such as associativity and commutativity)
by means of an  algorithm that dynamically simplifies the traces by detecting control and data dependencies, and dropping useless data  that do not influence the final result.  
Our methodology is particularly  suitable for analyzing complex, textually-large system computations 
such as those delivered  as counter-example traces by 
Maude model-checkers.
\end{abstract}

\section{Introduction}\label{sec:intro}

The analysis of execution traces plays a fundamental role in many program manipulation techniques.
Trace slicing is a technique for 
reducing the size of traces by focusing on selected aspects of program execution, which makes it suitable for trace analysis
and monitoring~\cite{ChenR09}.

Rewriting Logic (RWL) is a very general \emph{logical} and \emph{semantic framework},
which is particularly suitable for formalizing  highly concurrent, complex systems 
(e.g., biological systems \cite{BBF09,Tal08} and Web systems \cite{ABER10,ABR09}). RWL is efficiently implemented in the high-performance system 
Maude \cite{maude-book}. 
Roughly speaking, a {\em rewriting logic  theory} seamlessly combines  
a {\em term rewriting system} (TRS) together with an {\em equational theory} that may include sorts, functions, and algebraic laws (such as commutativity and associativity)  so that rewrite steps are applied {\em modulo} the equations.
Within this framework, the system states are typically represented as elements of an algebraic data type that is specified by the equational theory, while the system computations  are  modeled via the rewrite rules, which describe  transitions between states. 

Due to the many important applications of RWL,  in recent years, the debugging and optimization of RWL theories 
have received growing attention 
\cite{ABBF10,MM02,RVM2010}. 
However, 
the existing tools provide hardly support for execution
trace analysis.
The original motivation for our work was to reduce the size of the counterexample traces  delivered by Web-TLR, which is a RWL-based model-checking tool for Web applications 
proposed in  \cite{ABER10,ABR09}.
As a matter of fact, the analysis (or even the simple inspection) of such traces 
may be unfeasible because of the size and complexity of the traces under examination.
Typical counterexample traces in  Web-TLR are {75~Kb} long
for a model size of {1.5~Kb}, that is, the trace is in a ratio of 5.000\% w.r.t.\ the model.

To the best of our knowledge, this paper presents the first 
trace slicing technique for RWL theories. The basic idea is to take a trace produced by the RWL engine and traverse and analyze it backwards to filter out events that are irrelevant for the rewritten task. 
The   trace slicing technique that we propose is fully general and can be applied to 
optimizing any RWL-based tool that manipulates  rewrite logic traces. 
Our  technique relies on a suitable mechanism of backward tracing 
that is formalized by means of a procedure that labels the calls (terms) involved in the rewrite steps. 
This allows us to infer, from a term $t$ and positions of interest on it,  positions of interest of the term that was rewritten to $t$.
Our labeling procedure extends the  technique in~\cite{BKV00}, which allows descendants and origins to be traced in orthogonal (i.e.,  left-linear and overlap-free) term rewriting systems in order  to deal with rewrite theories that may contain commutativity/associativity axioms, 
as well as  nonleft-linear, collapsing equations and rules.
As in dynamic tracing  \cite{FT94,Terese03book}, our definition of labeling uses a relation on contexts derived from the reduction relation on terms, where the symbols in the left-hand side 
of a rule propagate to all symbols of its right-hand side.
This   labeling relation allows us to make precise the {\em dynamic dependence} of function symbols  occurring in the terms of a reduction sequence on symbols in previous terms in that sequence~\cite{FT94}.

\medskip

\noindent{\em Plan of the paper.}
Section~\ref{sec:prelim} summarizes some preliminary definitions and notations about term rewriting systems.
In Section~\ref{sec:rewrite-m-e}, we recall the essential notions concerning  
rewriting modulo  equational theories.
Section~\ref{sec:lab-trac}  describes the main kinds of labeling and tracing in term rewrite systems. 
In Section~\ref{sec:labeling_and_slicing}, we formalize our backward trace slicing technique for elementary rewriting logic theories.
Section~\ref{sec:slicing-m-e} extends the trace slicing technique of Section~\ref{sec:labeling_and_slicing} by considering extended rewrite theories, i.e., rewrite theories that may include collapsing, nonleft-linear rules, associative/commutative equational axioms, and built-in operators.
Section \ref{sec:exp} describes a software tool that implements the proposed backward slicing technique and reports on an experimental evaluation of the tool that allows us to assess the practical advantages of the trace slicing technique.
In Section~\ref{sec:related}, we discuss some related work and 
then we conclude.
Proofs of the main technical results can be found in Appendix~\ref{app:proof}.

\section{Preliminaries}\label{sec:prelim}

A many-sorted signature $(\Symbols,S)$ consists of a set of sorts $S$ and a $S^* \times S$-indexed family of sets $\Symbols=\{\Sigma_{\bar{s} \times s}\}_{(\bar{s},s)\in S^* \times S}$, which are sets of {\em function symbols} (or operators)  with a
given string of argument sorts and result sort. 
Given an $S$-sorted set  $\Variables = \{\Variables_s\mid s\in S\}$  of disjoint sets of variables, 
$\Tc_{\Sigma}(\Variables)_s$ and ${\Tc_{\Sigma}}_s$ are the sets of terms and ground terms of sorts $s$, respectively. We write $\Tc_{\Sigma}(\Variables)$ and $\Tc_{\Sigma}$ for the corresponding term algebras. 
An \emph{equation} is a pair of terms of the form $s=t$, with $s,t\in \Tc_{\Sigma}(\Variables)_s$.  
In order to simplify the presentation, we often disregard %
sorts when no confusion can arise.

Terms are viewed as labelled trees in the usual way.  
Positions are
represented by sequences of natural numbers denoting an access path in a
term. The empty sequence $\Lambda$ denotes the root position.
By $root(t)$, we denote the symbol that occurs at the root position of $t$. 
We let $\pos(t)$  denote the set of positions of $t$.
By notation $w_1.w_2$, we denote the concatenation of positions (sequences) $w_1$ and  $w_2$.
Positions are ordered by the prefix ordering, that is, given the positions $w_1,w_2$, 
$w_1\leq w_2$ if there exists a position $x$ such that  $w_1.x=w_2$.
$t_{|u}$ 
is the subterm at the position $u$ of $t$.  $t[r]_u$ is
the term $t$ with the subterm rooted at the position $u$ replaced by $r$.  
A substitution $\sigma$ is a mapping from 
variables to terms  \linebreak $\{x_1 / t_1,\ldots,x_n / t_n\}$  such that $x_i\sigma = t_i$ for $i=1,\ldots,n$
(with $x_{i}\neq x_{j}$ if  $i \neq j$), and $x\sigma = x$ for any
other variable $x$. By $\varepsilon$, we denote the {\em empty} substitution.
Given a substitution $\sigma$, the {\em domain} of $\sigma$ is
the set $\varname{Dom}(\sigma)=\{x|x\sigma\neq x\}$.
By $\Var(t)$ (resp.\ $FSymbols(t)$), we denote the set of variables (resp.\ function symbols) occurring in the term $t$.

A \emph{context} is a term $\gamma\in\Tc_{\Sigma\cup\{\Box\}}(\Variables)$  with zero or
more holes $\Box$\footnote{Actually, when considering types, we assume to have a distinct $\Box_s$ symbol for each sort $s\in S$, and by abuse we  simply denote $\Box_s$ by $\Box$.}, and $\Box\not\in\Sigma$. We write $\gamma[\ ]_u$ to denote that there is a hole at position $u$ of $\gamma$.
By notation $\gamma[\ ]$, we define an arbitrary context (where the number and the positions of the holes
are clarified \emph{in situ}), while we write $\gamma[t_1,\ldots t_n]$ to denote the term obtained by filling the holes
appearing in $\gamma[\ ]$ with terms $t_1,\ldots,t_n$.
By notation $t^\Box$, we denote the context  obtained  by applying the substitution $\sigma = \{ x_1/\Box, \ldots, x_n/\Box\}$ to $t$, where $\Var(t) = \{x_1\ldots,x_n\}$ (i.e., 
$ t^\Box=t\sigma$).
A \emph{term
rewriting system} (TRS for short) is a pair $(\Symbols,{R})$, where
$\Symbols$ is a signature and $R$ is a finite set of reduction (or
rewrite) rules of the form $\lambda \rightarrow \rho$,
$\lambda, \rho \in \Tc_{\Sigma}(\Variables)$, $\lambda \not \in \Variables$ and $\Var(\rho)
\subseteq \Var(\lambda)$.
We often write just $R$ instead of $(\Symbols,{R})$. 
A rewrite step is the application of a rewrite
rule to an expression.  A term $s$
\emph{rewrites} to a term $t$ via $r \in{R} $, $s \overarrow{r }_R t$ (or  $s\stackrel{r ,\sigma}{\rightarrow_R}t$),
if there exists a position $q$ in $s$ such that $\lambda$ {\em matches\/} $s_{|q}$ via 
a substitution $\sigma$ (in symbols,
$s_{|q} = \lambda\sigma$), and $t$ is obtained from $s$ by replacing the subterm $s_{|q} = \lambda\sigma$ with the term $\rho\sigma$, in symbols $t = s[\rho\sigma]_q$. 
The rule $\lambda\raw \rho$ (or equation $\lambda= \rho$) is \emph{collapsing} if $\rho \in \Variables$;
it is \emph{left-linear} if no variable occurs in $\lambda$ more than once.
We denote the transitive and reflexive closure of $\raw$ by $\raw^*$.

Let $r : \lambda \rightarrow \rho$  be a rule.
We call the context $\lambda^\Box$  (resp.~$\rho^\Box$)  {\em redex pattern} (resp.\ {\em contractum pattern}) of $r$. For example, 
the context $f(g(\Box,\Box),a)$ (resp.\ $d(s(\Box),\Box)$) is the redex pattern (resp.\ contractum pattern) of the rule $r : f(g(x,y),a) ) \rightarrow d(s(y),y)$, where $a$ is a constant symbol.

\section{Rewriting Modulo Equational Theories}\label{sec:rewrite-m-e}

An \emph{equational theory}
is a pair $(\Sigma, E)$, where
$\Sigma$ is a signature %
 and $E=\Delta \cup B$ consists of a set of (oriented) equations $\Delta$ together with a collection $B$ of equational axioms (e.g., associativity and commutativity axioms) 
 that are associated with some operator of $\Sigma$. 
The equational theory $E$ induces a least congruence relation on the term algebra $\Tc_{\Sigma}(\Variables)$,
which is usually denoted by $=_E$. 
A {\em rewrite theory} is a triple $\cR = (\Sigma, E, R)$, where $(\Sigma, E)$ is an equational theory, and 
$R$ is a TRS.  Examples of rewrite theories   can be found in \cite{maude-book}.
Rewriting modulo equational theories \cite{MM02} can be defined by lifting the standard  rewrite relation $\raw_{R}$ on terms to the $E$-congruence classes induced by $=_E$. 
More precisely, the rewrite relation $\to_{R/E}$ for rewriting modulo $E$ is defined as $=_E \circ \to_R \circ =_E$. 
A computation in $\cR$ using $\raw_{R\cup\Delta,B}$ is a {\em rewriting logic deduction}, in which the {\em equational simplification} %
with $\Delta$ (i.e., applying the oriented equations in $\Delta$ to a term $t$ until    a canonical form $t\!\downarrow_{E}$ is reached where no further equations can be applied) is intermixed with the rewriting computation with the rules of~$R$, using an {\em algorithm of matching modulo\footnote{A subterm of $t$ matches $l$ ({\em modulo $B$}) via the substitution  $\sigma$ if $t=_B u$ and $u_{|q}=l\sigma$ for a position $q$ of $u$.} $B$} in both cases. 
Formally, given a rewrite theory $\cR = (\Sigma, E, R)$, where $E=\Delta\cup B$, 
a {\em rewrite step modulo $E$}  on a term $s_0$ by means of the rule $r : \lambda\rightarrow\rho \in R$  (in symbols, $s_0 \overarrow{r }_{R\cup\Delta,B}  s_1$) 
can be implemented as follows: 
$(i)$ apply (modulo $B$) the equations of  $\Delta$ %
on $s_0$ to reach a canonical form $(s_0\downarrow_{E})$;
$(ii)$ rewrite (modulo $B$) $(s_0\downarrow_{E})$ to term $v$ by using $r\in R$;
and $(iii)$, apply (modulo $B$) the equations of $\Delta$ %
 on $v$ again to reach a canonical form for~$v$, ${s_1=v\downarrow_{E}}$.

Since the equations of $\Delta$  are implicitly oriented (from left to right), the equational simplification can be seen as a sequence of (equational) rewrite steps ($\rightarrow_{\Delta/B}$).
Therefore, a {\em rewrite step modulo $E$} %
$s_0 \overarrow{r }_{R\cup\Delta,B} s_1$ can be expanded into a sequence of rewrite steps as follows:
{\scriptsize
$$
\begin{array}{c}
  \hspace*{.3cm} \mbox{\scriptsize equational} \hspace*{1.2cm} \mbox{\scriptsize rewrite} \hspace*{1.2cm} \mbox{\scriptsize equational} \hspace*{.5cm} \\ 
 \hspace*{.4cm}   \mbox{\scriptsize simplification} \hspace*{1cm} \mbox{\scriptsize  step$/_B$} \hspace*{1.1cm} \mbox{\scriptsize  simplification} \hspace*{.6cm} \\
 s_0 \overbrace{\rightarrow_{\Delta/B}..\rightarrow_{\Delta/B}   {s_0\!\!\downarrow}_{E}}  \overbrace{=_B u \overarrow{r }_R v} \overbrace{\rightarrow_{\Delta/B} ..\rightarrow_{\Delta/B}  {v\!\!\downarrow}_{E}} = s_1  
\end{array}
$$
}

Given a finite rewrite sequence $\cS=s_0 \raw_{R\cup\Delta,B} s_1 \raw_{R\cup\Delta,B}\ldots  \raw s_n$ in the rewrite theory $\cR$, the \emph{execution trace} of $\cS$ is the rewrite sequence $\cT$ obtained by expanding all the rewrite steps $s_i \raw_{R\cup\Delta,B} s_{i+1}$ of $\cS$ as is described above. %

In this work, a rewrite theory $\cR = (\Sigma, B \cup \Delta, R) $ is called {\em elementary} 
if $\cR$ does not contain equational axioms ($B=\emptyset$) and both rules and equations are left-linear and not collapsing.

\section{Labeling and Tracing in Term Rewrite Systems}\label{sec:lab-trac}

Labeling an object allows us to distinguish it within a collection of identical objects.
This is a useful means to keep track of a given object in a dynamic system.
In the following, we introduce a rather intuitive example that allows us to illustrate how the labeling and tracing process work.

\begin{example}\label{ex:ex-lab}
Let 
${r_1 :  f(x) \raw b}$, and ${r_2 :  g(b) \raw m(a)}$ be two rewrite rules.
Let $g(f(a))$ be an initial term.
Then, by applying $r_1$ and $r_2$ we get the execution trace 
$\cT = g(f(a)) \overarrow{r_1} g(b) \overarrow{r_2} m(a)$.
\end{example}

In term rewriting,  we distinguish three kinds of labeling according to the information recorded by them in an execution trace.

\begin{itemize}

\item
[$(i)$] 
The Hyland--Wadsworth labeling~\cite{Hyland76,Wadsworth76} records the creation level of each symbol.
Roughly speaking, from an initial (default) creation level, the accomplishment of a rewrite step increases by one the creation level of the affected symbols.
For example, consider the execution trace $\cT$ of Example~\ref{ex:ex-lab} together with an initial level $0$ for all symbols. 
Then,
$$g^0(f^0(a^0)) \overarrow{r_1} g^0(b^1) \overarrow{r_2} m^2(a^2) $$

\item
[$(ii)$] The Boudol--Khasidashvili labeling~\cite{Boudol85,Khasidashvili88,Khasidashvili93} records the history of the term in execution traces.
The general idea is to record in the history the applied rule and the symbols of the redex pattern.
This information is taken as the label for the head symbol of the contractum pattern.
Consider again Example~\ref{ex:ex-lab}. 
First, the set of rules is labeled as follows:
$$
r_{1_{f(x)}} :  f(x) \raw r_{1_{f(x)}} \hspace*{1cm}
r_{2_{g(b)}} :  g(b) \raw r_{2_{g(b)}}(a)
$$%
Then, the labeling of the execution trace $\cT$ is:
$$ g(f(a)) \raw g( r_{1_{f(x)}} ) \raw  r_{2_{g(b)}}(a)  $$%
Note that the initial term of this sequence is not labeled, i.e., the initial label is the identity.

\item
[$(iii)$] The Lévy labeling~\cite{Levy76} records the history of each symbol in the term.
Basically, this labeling combines the previous two labelings and attaches the history on every symbol of the contractum pattern. Let us show an example. As before, consider Example~\ref{ex:ex-lab}.
The labeled rules are as follows:
$$
r_{1_{f(x)^\lambda}} : f(x)^\lambda \raw r_{1_{f(x)^\lambda}}^\lambda   \hspace{1cm}
r_{2_{g(b)^\lambda}} : g(b)^\lambda \raw r_{2_{g(b)^\lambda}}^\lambda ( r_{2_{g(b)^\lambda}}^1  )
$$
and the labeled trace of $\cT$ is:
$$
g(f(a))^\lambda ( g(f(a))^1 (g(f(a))^{1.1})  ) \raw
g(f(a))^\lambda ( r_{1_{f(x)^\lambda}}^1 ) \raw 
r_{2_{g(b)^\lambda}}^\lambda ( r_{2_{g(b)^\lambda}}^1 )
$$
Note that due to the accumulation of labels, Lévy labels soon become neither readable nor legible.
Note also that this labeling keeps the maximal information in a rewrite step.

\end{itemize}

In this work, we rely on Klop labeling~\cite{BKV00}, which is inspired by Lévy labeling.
Roughly speaking, Klop labeling employs Greek letters and concatenation of Greek letters as labels.
That is, 
given a rewrite step $t \raw s$, the symbols of $t$ are decorated by using Greek letters as labels.
Then, a new label $l$ is formed by concatenating the labels of the redex pattern. 
Finally, $l$ is attached to every symbol of the contractum pattern of $s$.
A formal definition of this labeling adapted to deal with rewriting logic theories is given in Section~\ref{sec:labeling}.

\medskip

Given a rewrite step $t \raw s$,  tracing allows one to establish a mapping among symbols of $t$ and symbols of $s$. 
Each symbol is mapped according to its location.
For example, occurrences of  symbols in the context of $t$, or in the computed substitution, are traced to the  same occurrences in $s$.
On the contrary, the mapping for the symbols in the redex and contractum patterns depend on the kind of tracing we adopt. 
Namely, in  {\em static} tracing the symbols do not persist through the execution trace.
On other hand, in {\em dynamic} tracing the symbols of the redex pattern are mapped to all symbols of the contractum pattern.
Let us illustrate this by means of an example.

\begin{example}
Consider the  rewrite step $g(f(a)) \overarrow{r_1} g(b)$ into the trace~$\cT$ of Example~\ref{ex:ex-lab}.
By considering the static tracing, the symbol $f$ within the term $g(f(a))$ does not leave a trace to the term $g(b)$ since $f$ belongs to redex pattern of $r_1$. 
Contrarily, $f$ dynamically traces to $b$. 
Finally, 
in both cases the symbol $a$ is discarded without leaving a trace in the rewrite step.
\end{example}

As for the dynamic tracing relation, the symbols can be partitioned into {\em needed} and {\em non-needed}. 
A symbol is called {\em needed} if it leaves a trace in the considered rewrite sequence.
For instance, in the previous example, $f$ is a {\em needed} symbol.
Instead $a$, which belongs to substitution $\sigma = \{x/a\}$, is a {\em non-needed} symbol.
Given an execution trace, the set of needed symbols in a term of the trace forms a prefix which is also called {\em needed} prefix.

Typically, tracing is implemented by means of labeling, i.e., the objects are labeled to be traced along the execution trace. 
For instances, let us consider Klop labeling for a rewrite step  $t \raw s$. 
A symbol in $t$ traces to a symbol in $s$, if and only if the label of the former is a sublabel of the label of the latter.
Note that this tracing relation is independent of the chosen tracing, while it is strictly tied to the labeling strategy.

Labeling and tracing relations in term rewriting systems have been studied in~\cite{Terese03book}.
In order to study the orthogonality of execution traces, \cite{Terese03book}~investigates the equivalence of labeling and tracing along with other characterizations such as permutation, standardization, and projection. As far as we know, the use of labeling and tracing for model checking and debugging purposes has not been previously discussed in the related literature.

\section{Backward Trace Slicing for Elementary Rewrite Theories}\label{sec:labeling_and_slicing}

In this section, we formalize a backward trace slicing technique for {\em elementary rewrite theories} that is based on a term labeling procedure that is inspired by~\cite{BKV00}. 
Since equations in $\Delta$ are treated as rewrite rules that are used to simplify terms, %
our formulation for the  trace slicing technique %
 is purely based on standard rewriting.

\subsection{Labeling procedure for rewrite theories}\label{sec:labeling}

Let us define a labeling procedure for rules 
similar to~\cite{BKV00} that allows us to trace %
symbols involved in a rewrite step.
First, we provide the notion of labeling for terms, and   then we show how it can be naturally lifted to rules and rewrite steps.

Consider a set  $\cA$ of \emph{atomic labels}, which are denoted by Greek letters $\alpha, \beta,\ldots$.
\emph{Composite labels} (or simply \emph{labels}) are defined as finite sets of elements of $\cA$.   %
By abuse, 
we write the label $\alpha\beta\gamma$ %
as a compact denotation for the set $\{\alpha, \beta,\gamma\}$.

A \emph{labeling} for a term $t\in\Tc_{\Sigma\cup\{\Box\}}(\Variables)$ is a  map  $L$ that assigns a label to (the symbol occurring at) each %
position $w$ of $t$, provided that $root(t_{|w}) \neq\Box$. 
If $t$ is a term, then $t^L$ denotes the labeled version of $t$. Note that, in the case when $t$ is a context, occurrences of symbol $\Box$ appearing in the labeled version of $t$ are not labeled.
The \emph{codomain} of a labeling $L$ is denoted by $\mathit{Cod}(L)=\{l\mid (w\mapsto l)\in L\}$.
An {\em initial labeling} for the term $t$ 
is a labeling for $t$ that assigns  distinct %
fresh atomic labels to each 
position of the term.
For example, given $t=f(g(a,a),\Box)$, then $t^L = f^\alpha(g^\beta(a^\gamma,a^\delta),\Box)$ is the labeled version of $t$ via the  initial labeling 
${L=}\{{\Lambda\mapsto\alpha},$ ${1\mapsto\beta}$, ${1.1\mapsto\gamma},$ ${1.2\mapsto\delta}\}$.
This notion extends to rules and rewrite steps in a natural way as shown below.

\subsubsection{Labeling of Rules.}

The labeling of a rewriting rule is formalized as follows:

\begin{definition}\label{def:ruleLabel} (rule labeling) \cite{BKV00}
Given a rule $r : \lambda \raw \rho$, a labeling  $L_{r}$ for $r$ is defined by means of the following procedure.

\begin{itemize}
\item[$r_1.$] 

The redex pattern $\lambda^\square$ %
is labeled by means of an initial labeling $L$.

\item[$r_2.$] A new label $l$ %
is formed by joining  
all the labels that occur  in the labeled redex pattern $\lambda^\square$
(say in alphabetical order) of the rule $r$. Label $l$ is then associated with each
position $w$ of the contractum pattern $\rho^\square$, %
provided that $root(\rho^\square_{|w})\neq \Box$.
\end{itemize}

\end{definition}

The labeled version of $r$ w.r.t. $L_{r}$ is denoted by $r^{L_r}$.
Note that the labeling procedure shown in Definition \ref{def:ruleLabel} does not assign labels to variables  but only to the function symbols occurring in the rule. 

\subsubsection{Labeling of Rewrite Steps.}
Before giving the definition of labeling for a rewrite step, we need to formalize the auxiliary notion of substitution labeling.

\begin{definition}(substitution labeling) \label{def:susbtL}
Let $\sigma=\{x_1/t_1,\ldots,x_n/t_n\}$ be a substitution.
A labeling $L_{\sigma}$ for the substitution $\sigma$ is defined by a set of initial labelings
$L_\sigma=\{L_{x_1/t_1},\ldots,L_{x_n/t_n} \}$ 
such that 
(i) for each binding $(x_i/t_i)$ in the substitution $\sigma$, $t_i$ is labeled using the corresponding initial labeling $L_{x_i/t_i}$, and 
(ii) the sets $\mathit{Cod}(L_{x_1/t_1}),\ldots,\mathit{Cod}(L_{x_n/t_n})$ are pairwise disjoint.
\end{definition}

By using Definition~\ref{def:susbtL}, we can formulate a labeling procedure for rewrite steps as follows.

\begin{definition}\label{def:labelled-step} (rewrite step labeling)  
Let $r:\lambda\rightarrow \rho$ be a rule, and  
$\mu : t\stackrel{r,\sigma}{\rightarrow}s$
be a rewrite step using $r$ such that  $t = C[\lambda\sigma]_q$ and $s= C[\rho \sigma]_q$, for a context $C$ and position $q$.
Let $\sigma=\{x_1/t_1,\ldots,x_n/t_n\}$. 
Let $L_r$ be a labeling for the rule $r$, $L_C$ be an initial labeling for the context $C$, 
and $L_\sigma=\{L_{x_1/t_1},\ldots,L_{x_n/t_n}\}$ be a labeling for the substitution $\sigma$ such that the sets 
$\mathit{Cod}(L_C),\mathit{Cod}(L_r)$, and $\mathit{Cod}(\sigma)$ are pairwise disjoint, where $\mathit{Cod}(\sigma)=\bigcup_{i=1}^n
\mathit{Cod}(L_{x_i/t_i}).$

The {\em rewrite step} labeling $L_{\mu}$ for $\mu$  is defined by successively applying the following steps:

\begin{itemize}

\item[$s_1.$]  %
First, positions of $t$ or $s$ that belong to the context $C$ are labeled by using the initial labeling $L_C$. 

\item[$s_2.$]  %
Then positions of $t_{|q}$ (resp. $s_{|q}$) that correspond to  the redex pattern (resp. contractum pattern) of the rule $r$ rooted at the position $q$ are labeled according to the labeling~$L_r$. 

\item[$s_3.$]  
Finally, for each term $t_j$, $j=\{1,\ldots,n\}$, which has been introduced in $t$ or $s$ via the binding $x_j/t_j\in\sigma$, 
with  $x_j\in Var(\lambda)$, $t_j$ is labeled using  the corresponding labeling $L_{x_j/t_j}\in L_\sigma$
\end{itemize}
\end{definition}

The labeled version of a rewrite step $\mu$ w.r.t. $L_{\mu}$ is denoted by $\mu^{L_\mu}$.
Let us illustrate these definitions by means of a rather intuitive example.

\begin{example}\label{ex:label-whole}
Consider the rule $r : f(g(x,y),a) ) \rightarrow d(s(y),y)$.
The labeled version of rule $r$ using the initial labeling $L=\{(\Lambda\mapsto\alpha,1\mapsto\beta,2\mapsto\gamma\}$ is as follows:
{\small $$ f^\alpha(g^\beta(x,y),a^\gamma) \rightarrow  d^{\alpha\beta\gamma}(s^{\alpha\beta\gamma}(y),y)$$}
Consider a rewrite step  $\mu : C[\lambda\sigma] \stackrel{r}{\rightarrow} C[\rho\sigma] $  using $r$, where $C[\lambda\sigma] = d(f(g(a,h(b)),a),a)$, $C[\rho\sigma]= d(d(s(h(b)),h(b)),a)$, and
${\sigma = \{x/a, y/h(b)\}}$. 
Let  $L_{C}=\{\Lambda\mapsto\delta,~ 2\mapsto\epsilon\}$, 
$L_{x/a} = \{\Lambda\mapsto\zeta\}$, and
$L_{y/h(b)}= \{\Lambda\mapsto\eta, 1\mapsto\theta \}$ be the labelings for $C$ and the bindings in $\sigma$, respectively.
Then, the corresponding labeled rewrite step $\mu^L$ %
is  as follows
{\small $$
\mu^L : d^\delta(f^\alpha(g^\beta(a^\zeta, h^\eta(b^\theta)), a^\gamma), a^\epsilon)
\raw 
d^\delta(d^{\alpha\beta\gamma}(s^{\alpha\beta\gamma}(h^\eta(b^\theta)), h^\eta(b^\theta)),a^\epsilon)
$$}
\end{example}

\subsection{Backward Tracing Relation}

Given a rewrite step $\mu : t \stackrel{r}{\rightarrow} s$ and the labeling process defined in the previous section, the {\em backward tracing relation} computes the set of positions in %
$t$ that are origin for a position $w$ in $s$. %
Formally.

\begin{definition}\label{def:tracing}
(origin positions)
Let $\mu : t \xrightarrow{r} s$ be a rewrite step and 
$L$ be a labeling for $\mu$ where $L_t$ (resp.\ $L_s$) is the labeling of $t$ (resp.\ $s$).
Given a position  $w$ of $s$, the set of origin positions of $w$ in $t$  
w.r.t.\  $\mu$ and $L$ (in symbols, $\lhd_\mu^L w$) is  defined as follows:
{\small $$
\begin{array}{r}
\lhd_\mu^L w= \{ v \in \pos{}(t) \mid 
 \exists p \in \pos(s), (v \mapsto l_v ) \in L_t, (p \mapsto l_p) \in L_s \mbox{ s.t. } p\leq w %
 
 \mbox{ and } l_v \subseteq l_p \}
\end{array}
$$}
\end{definition}
\medskip

Note that Definition~\ref{def:tracing} considers all positions of $s$ in the path from its root to $w$ for computing the origin positions of $w$.
Roughly speaking,  a  position $v$ in $t$  is an origin of $w$, if
the label of the symbol that occurs in $t^L$ at position  $v$ is contained in the label of a symbol that occurs   in $s^L$ in the path from its 
root to the position $w$. 

\begin{example}\label{ex:back-tracing-relation}
Consider  again the rewrite step $\mu^L : t^L {\rightarrow}s^L$
of Example~\ref{ex:label-whole}, and let $w$ be the position $1.2$ %
of $s^L$.
The set of labeled symbols  occurring  in $s^L$ in the path from its
root to position $w$ is the set
$ z=\{ h^\eta, d^{\alpha\beta\gamma}, d^\delta \}$.
Now, the labeled symbols occurring  in $t^L$ whose label is contained in the label of one element of $\tt z$ is the set
$ \{ h^\eta, f^\alpha, g^\beta, a^\gamma, d^\delta \}$.
By Definition~\ref{def:tracing}, 
 the set of origin positions of $w$ in  $\mu^L$ is 
$  \lhd_\mu^L w =  \{1.1.2,~1,~1.1,~1.2,~\Lambda \}$.
\end{example}

\subsection{The Backward Trace Slicing Algorithm}\label{sec:slicing}

First, let us formalize the slicing criterion, which %
basically represents the information %
we want to trace back across the execution trace in order to find out the ``origins'' of the data we observe.
Given a term $t$, we denote by $\cO_t$ the set of {\em observed} positions  of $t$. %

\begin{definition}\label{def:slicingCriterion}
{(slicing criterion)}
Given a rewrite theory  $\cR=(\Sigma,\Delta,R)$ and  
 an execution trace %
${\cT : s \rightarrow^* t}$ in $\cR$,
a slicing criterion for  ${\cT}$ is any set $\cO_{t}$
of positions of the term $t$. 
\end{definition}

In the following, we show how backward trace slicing can be performed by exploiting the backward tracing relation $\lhd_\mu^L$
that was introduced in Definition~\ref{def:tracing}.
Informally, given a slicing criterion~$\cO_{t_{n}}$ %
 for $\cT:t_0\raw t_2\raw\ldots \raw t_n$,  at each rewrite step  $t_{i-1}\raw t_{i}$, $i=1,\ldots, n$,
our technique inductively computes the backward tracing relation between the relevant positions of   $t_i$ and those in $t_{i-1}$. The algorithm proceeds backwards, from the final term $t_n$ to the initial term $t_0$, and recursively generates at step $i$ the corresponding set of relevant positions,  $P_{t_{n-i}}$. %
Finally, by means of a %
removal function, 
a simplified trace is obtained where each $t_j$  is replaced by the corresponding {\em term slice} that contains only the relevant information w.r.t. $P_{t_{j}}$.

\medskip
\begin{definition}\label{def:rlvSym}
{(sequence of relevant position sets)} 
Let $\cR=(\Sigma,\Delta,R)$ be a rewrite theory, and $\cT: t_0 \stackrel{r_1}{\rightarrow} t_1  \ldots  \stackrel{r_n}{\rightarrow} t_n$
 be an execution trace in $\cR$.
Let $L_i$ be the labeling for the rewrite step $t_i \rightarrow t_{i+1}$ with $0 \leq i <  n$.
The sequence of  relevant position sets  in $\cT$ w.r.t. the slicing criterion  %
$\cO_{t_{n}}$  is defined as follows:
$$\begin{array}{l}
relevant\_positions(\cT, \cO_{t_{n}}) = [P_0, \ldots, P_n] \\ %
\mbox{where }  
\begin{cases}
P_n = \cO_{t_{n}} \\
P_j = \bigcup_{p\in P_{j+1}}  \lhd^{L_j}_{(t_{j} \overarrow{} ~t_{j+1})} p,
\mbox{ with }  0 \leq j <  n  
\end{cases}
\end{array}
$$

\end{definition}
\medskip

Now, it is straightforward to  formalize a procedure that obtains a term slice from each term $t$ in $\cT$ and the corresponding set of relevant positions of $t$.
We introduce the fresh symbol $\bullet \not\in\Sigma$ to 
replace any information in the term  that is not relevant, 
hence does not affect the observed criterion. 

\begin{definition}\label{def:termSlice}
{(term slice)} 
Let $t \in \Tc_{\Sigma}$ be a term and  
$P$ be a set of positions of $t$.
A \emph{term slice}  of $t$ with respect  to %
 $P$    is defined as follows: 
$$slice(t,P) =  sl\_rec(t,P,\Lambda), ~where$$
$$
sl\_rec(t,P,p) = \left \{ \begin{array}{l} 
   f(sl\_rec(t_1, P, p.1), \ldots, sl\_rec(t_n,P,p.n))  \\
    \hspace*{.7cm}  
    \mbox{ if } t= f(t_1,\ldots,t_n) \mbox{ and there exists } w \mbox{ s.t. } (p.w) \in P  \\ 
  \bullet  \hspace*{.5cm} \mbox{ otherwise}
\end{array}\right. 
$$

\end{definition}

In the following, we use the notation $t^\bullet$ %
to denote a term slice of the term~$t$.
Roughly speaking, the symbol $\bullet$ can be thought of as a variable,
so that any term $t' \in \tau(\Sigma)$ can be considered as a possible
concretization of $t^\bullet$ if it is an ``instance" of $[t^\bullet]$,
where $[t^\bullet]$ is the term that is obtained by replacing all occurrences
of $\bullet$ in $t^\bullet$ with fresh variables.

\begin{definition}\label{def:app} (term slice concretization)
Given   $t' \in \Tc_{\Sigma}$ %
and a term slice $t^\bullet$, we define $t^\bullet \propto t'$ if $[t^\bullet]$   is (syntactically)  more general than $t'$ (i.e.,  $[t^\bullet]\sigma=t'$, for some substitution $\sigma$). We also say that
$t'$ is a concretization of  $t^\bullet$.
\end{definition}

Figure~\ref{fig:slice} illustrates the notions of term slice and  term slice concretization for a given term $t$ w.r.t.\ the set of positions $\{1.1.2,1.2\}$.

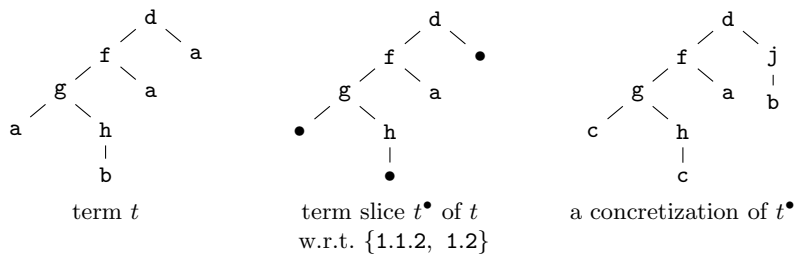
\begin{figure}[t!]
 \centering
$$
\begin{array}{c@{\hspace{1cm}}c@{\hspace{1cm}}c}
\begin{tikzpicture}  [level distance=.5cm]
\node {$\tt d$} [sibling distance=1.2cm]
child{ node {$\tt f$} [sibling distance=1.2cm]
  child{ node {$\tt g$} [sibling distance=1.2cm]
     child{ node {$\tt a$} }
     child{ node {$\tt h$} [level distance=.6cm] child{ node {$\tt b$} }
     }
  }
  child{ node {$\tt a$ } [sibling distance=1.2cm]
  }
}
child{ node {$\tt a$} }
;
\end{tikzpicture} 
 & 
 \begin{tikzpicture}  [level distance=.5cm]
\node {$\tt d$} [sibling distance=1.2cm]
child{ node {$\tt f$} [sibling distance=1.2cm]
  child{ node {$\tt g$} [sibling distance=1.2cm]
     child{ node {$\bullet$} }
     child{ node {$\tt h$} [level distance=.6cm] child{ node {$\bullet$} }
     }
  }
  child{ node {$\tt a$ } [sibling distance=1.2cm]
  }
}
child{ node {$\bullet$} }
;
\end{tikzpicture} 
 & 
 \begin{tikzpicture}  [level distance=.5cm]
\node {$\tt d$} [sibling distance=1.2cm]
child{ node {$\tt f$} [sibling distance=1.2cm]
  child{ node {$\tt g$} [sibling distance=1.2cm]
     child{ node {$\tt c$} }
     child{ node {$\tt h$} [level distance=.6cm] child{ node {$\tt c$} }
     }
  }
  child{ node {$\tt a$ } [sibling distance=1.2cm]
  }
}
child{ node {$\tt j$} [level distance=.6cm] [sibling distance=1.2cm]
     child{ node {$\tt  b$} }
}
;
\end{tikzpicture} 
 \\
\mbox{term } t & \mbox{term slice } t^\bullet \mbox{ of } t & \mbox{a concretization of } t^\bullet \\
 &  \mbox{ w.r.t.\ } \tt \{1.1.2,~ 1.2\} &  
\end{array}
$$

\caption{A term slice and a possible concretization.}\label{fig:slice}
\end{figure}

Let us  define a {\em sliced rewrite step} between two  term slices as follows.

\begin{definition}\label{def:sliced_rewrite}
(sliced rewrite step)
Let $\cR=(\Sigma, \Delta, R)$ be a rewrite theory and $r$ a rule of $\cR$.
The term slice $s^\bullet$ rewrites to the term slice $t^\bullet$ via $r$ (in symbols, $s^\bullet \overarrow{r} t^\bullet$) if there exist two terms $s$ and $t$  
such that $s^\bullet$ is a term slice of $s$, $t^\bullet$ is a term slice of $t$, and $s \overarrow{r} t$. 
\end{definition}
\medskip

Finally, using Definition \ref{def:sliced_rewrite}, backward %
trace slicing   %
is formalized  as follows.

\medskip
\begin{definition}\label{def:bak-dyn-tech}
{(backward trace slicing)}
Let $\cR=(\Sigma,\Delta,R)$ be a rewrite theory, and  $\cT: t_0 \stackrel{r_1}{\rightarrow} t_1  \ldots  \stackrel{r_n}{\rightarrow} t_n$ be an execution trace in $\cR$.
Let $\cO_{t_n}$ be a slicing criterion for $\cT$, and 
let $[P_0, \ldots, P_n]$ be the sequence of the relevant position sets of $\cT$ w.r.t.\  $\cO_{t_n}$.
A trace slice $\cT^\bullet$ of $\cT$ w.r.t.\ $\cO_{t_n}$ %
is defined as the sliced rewrite sequence of term slices $t_i^\bullet = slice(t_i, P_i)$ which is obtained by 
gluing together the sliced rewrite steps in the set 
$${\cal K}^\bullet = \{ t_{k-1}^\bullet \stackrel{r_k}{\rightarrow} t_k^\bullet \mid 0 < k \leq n ~\wedge~ t_{k-1}^\bullet \neq t_k^\bullet\}.$$
\end{definition}

Note that in Definition~\ref{def:bak-dyn-tech}, the  sliced rewrite steps that %
do not affect the relevant positions %
(i.e., $t_{k-1}^\bullet \stackrel{r_k}{\rightarrow} t_k^\bullet$ with  $t_{k-1}^\bullet=t_{k}^\bullet$)
are discarded, which further reduces the size of the trace.

A desirable property of a slicing technique is to ensure that, for any 
concretization of the term slice $t^\bullet_0$, 
the trace slice $\cT^\bullet$ can be reproduced.
This property  ensures that the rules involved in  $\cT^\bullet$ can be applied again to every concrete trace $\cT'$ that we can derive by instantiating all the variables in $[t^\bullet_0]$ with arbitrary terms.

\begin{theorem}\label{prop:reproduded} (soundness) 
Let $\cR$ be an elementary rewrite theory. 
Let $\cT$ be an  execution trace in the rewrite theory $\cR$, and let $\cO$ be a slicing criterion for $\cT$.
Let $\cT^\bullet : t_0^\bullet \stackrel{r_1}{\rightarrow} t_1^\bullet  \ldots \stackrel{r_n}{\rightarrow} t_n^\bullet $ be the corresponding  trace slice w.r.t.\ $\cO$. %
Then, for any concretization  
$t_0'$ of $t_0^\bullet$, 
it holds that $\cT':t_0' \stackrel{r_1}{\rightarrow} t_1' \ldots \stackrel{r_n}{\rightarrow} t_n'$
is an  execution trace in $\cR$, and   $t_i^\bullet \propto t_i'$, for $i=1,\ldots,n$.
\end{theorem}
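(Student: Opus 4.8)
The plan is to reduce the statement to a single-step reproducibility lemma and then close the argument by induction on the length $n$ of the trace slice. Concretely, I would first prove: given a rewrite step $\mu : t \stackrel{r}{\rightarrow} s$ with labeling $L$, a set $P_s$ of relevant positions of $s$, and the induced set $P_t = \bigcup_{p \in P_s} \lhd_\mu^L p$, if $t'$ is any concretization of $t^\bullet = slice(t,P_t)$, then $r$ can be applied to $t'$ at the same position $q$, yielding a term $s'$ that is a concretization of $s^\bullet = slice(s,P_s)$. Granting this lemma, the theorem follows: starting from a concretization $t_0'$ of $t_0^\bullet$, I apply the lemma to each step $t_{i-1}^\bullet \stackrel{r_i}{\rightarrow} t_i^\bullet$ of $\cT^\bullet$ in turn, obtaining $t_i'$ with $t_{i-1}' \stackrel{r_i}{\rightarrow} t_i'$ and $t_i^\bullet \propto t_i'$, which is exactly the claim. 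Note that by Definition~\ref{def:bak-dyn-tech} every step retained in $\cT^\bullet$ is \emph{non-trivial} (i.e.\ $t_{i-1}^\bullet \neq t_i^\bullet$), a fact I will use below.

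The heart of the matter is the single-step lemma, and this is where the labeling machinery of Section~\ref{sec:labeling} is used. The key invariant I would maintain is that a concretization agrees with the original term at every relevant position: since $[t^\bullet]$ carries the original symbols of $t$ at the positions in $P_t$ and only fresh variables at the $\bullet$-positions, any concretization $t' = [t^\bullet]\tau$ coincides with $t$ on $P_t$. I would then decompose both $t = C[\lambda\sigma]_q$ and $s = C[\rho\sigma]_q$ into context, pattern, and substitution parts and analyze how $\lhd_\mu^L$ acts on each, relying on three facts that are immediate from Definitions~\ref{def:ruleLabel}, \ref{def:labelled-step}, and~\ref{def:tracing}. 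First, context symbols keep the same $L_C$-labels in $t$ and $s$, so a relevant context position of $s$ (in particular every ancestor of $q$ on the path to a relevant position) traces to the identical position of $t$. Second, substitution symbols keep their $L_\sigma$-labels, so the relevant fragments of each binding traced back from $s$ coincide with those in $t$. Third, and crucially, by the Klop labeling every symbol of the contractum pattern $\rho^\Box$ carries the same composite label $l$, namely the join of all atomic labels of the redex pattern $\lambda^\Box$; hence as soon as a single relevant position of $s$ lies on or below the contractum root at $q$ (which is exactly the case for a non-trivial sliced step), the label $l$ appears on the path to it and every redex-pattern position of $t$ is forced into $P_t$. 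This guarantees that the entire redex pattern of $r$, together with the context ancestors down to $q$, survives the slicing of $t$.

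With these facts in place the single step is reproduced as follows. Because the full redex pattern of $r$ is preserved in $t^\bullet$ and hence present in any concretization $t'$, the left-hand side $\lambda$ still matches $t'_{|q}$ via some substitution $\sigma'$. Here left-linearity is essential: each variable of $\lambda$ occurs only once, so no equality constraint between independently concretized sliced-away subterms can obstruct the match; and non-collapsing ($\rho\notin\Variables$) guarantees that $\rho^\Box$ has at least one symbol to carry the label $l$, so that the tracing argument above is non-vacuous. Applying $r$ gives $s' = C'[\rho\sigma']_q$. To see $s^\bullet \propto s'$ I compare $[s^\bullet]$ with $s'$ position by position: the relevant context agrees because $t'$ agreed with $t$ there and the context is untouched by the step; the contractum-pattern symbols agree because they are fixed by $r$; and the relevant substitution fragments agree because they were traced back to relevant fragments of $t$, are thus preserved in $t'$, and are carried unchanged by $\sigma'$ into $s'$. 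The remaining ($\bullet$-)positions of $[s^\bullet]$ are distinct fresh variables and can absorb whatever $s'$ carries there; in particular, if a variable of $\rho$ occurs several times so that the same subterm appears at several positions of $s$, the independent fresh renaming in $[s^\bullet]$ still makes $[s^\bullet]$ more general than $s'$.

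The main obstacle I anticipate is the third fact above together with its exact interface with the slicing of substitutions: one must argue rigorously that $\lhd_\mu^L$ recovers \emph{all and only} the redex-pattern and relevant-substitution positions of $t$, and that Definition~\ref{def:termSlice} together with the fresh-variable reading $[\cdot]$ interacts correctly with repeated variables of $\rho$ and with the independent concretization of distinct $\bullet$-subterms. This is precisely the place where the elementary-theory hypotheses ($B=\emptyset$, left-linearity, non-collapsing) are consumed, so care must be taken to check that dropping any of them would break either the matching step or the join-label propagation. The bookkeeping of positions and labels across the three regions (context, pattern, substitution) is routine but delicate, and is where most of the formal detail will go.
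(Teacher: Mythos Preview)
Your proposal is correct and follows essentially the same strategy as the paper: reduce to a single-step reproducibility result and close by induction on the length of $\cT^\bullet$, using non-triviality of retained steps to force some relevant position below the redex root, then invoking the Klop labeling to recover the full redex pattern in $t^\bullet$ and left-linearity/non-collapsing to guarantee the match and the label propagation. The paper merely factors your single-step lemma into three pieces (Lemma~\ref{le:singlepos} on origin positions, Lemma~\ref{le:relevant} on slices, Proposition~\ref{pr:onestep} on one-step soundness), but the content and the places where the elementary-theory hypotheses are consumed are the same.
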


The proof of Theorem~\ref{prop:reproduded} relies on the fact that  redex patterns are preserved  by backward trace slicing. Therefore,  for $i=1,\ldots,n$, the rule $r_i$ can be applied to any concretization $t_{i-1}'$ of term $t_{i-1}^\bullet$ since the redex pattern of $r_i$ does appear  in $t_{i-1}^\bullet$, and hence in $t_{i-1}'$. 
A detailed proof of Theorem~\ref{prop:reproduded} can be found in Appendix~\ref{app:proof}.

Note that our basic framework enjoys neededness of the extracted information (in the sense of~\cite{Terese03book}),
since the  information captured by every sliced rewrite step in a trace slice is all and only the information that is needed to produce the data of interest in the reduced term.

\section{Backward Trace Slicing for Extended Rewrite Theories}\label{sec:slicing-m-e}

In this section, we consider an extension of our basic slicing methodology that allows us to deal with extended rewrite theories $\cR=(\Sigma,E,R)$ where the equational theory $(\Sigma, E)$ may contain associativity and commutativity axioms, and $R$ may contain collapsing as well as nonleft-linear rules. 
Moreover, we also consider the built-in operators, which are not equipped with an explicit functional definition (e.g., Maude arithmetical operators). %
It is worth noting that all the proposed extensions are restricted to the labeling procedure of Section~\ref{sec:labeling}, keeping the backbone of our slicing technique unchanged.  

\subsection{Dealing with collapsing and nonleft-linear rules}

\noindent\textbf{Collapsing Rules.}
The main difficulty with collapsing rules is that they have a trivial contractum pattern, which consists in the empty context~$\Box$; hence,
  it is not possible to propagate labels 
from the left-hand side of the rule to its right-hand side. This makes the rule labeling procedure of Definition~\ref{def:ruleLabel} completely unproductive
for trace slicing.

In order to overcome this problem, we keep track of the labels in the left-hand side of the  collapsing rule 
$r$, whenever a rewrite step involving $r$ takes place. This amounts to extending the labeling procedure of Definition \ref{def:labelled-step} as follows.

\begin{definition}\label{def:collapsext}
(rewrite step labeling for collapsing rules)
Let $\mu : t\stackrel{r,\sigma}{\rightarrow}s$ be a rewrite step s.t.\ $\sigma=\{x_1/t_1,\ldots,x_n/t_n\}$, where $r: \lambda\rightarrow x_i$ is a collapsing rule. 
Let $L_r$ be a labeling for the rule $r$.
In order to label the step $\mu$, we extend the labeling procedure formalized in  Definition \ref{def:labelled-step} as follows:
\begin{itemize}
\item[$s_4$.] Let $t_i$ be the term  introduced in $s$  via the binding $x_i/t_i\in\sigma$, for some $i \in \{1,\ldots,n\}$. Then,
the label $l_i$ of the root symbol of $t_{i}$ in $s$ is replaced by a new  composite label  $l_cl_i$, where $l_c$ is formed   by  joining all the labels appearing in the redex pattern of $r^{L_{r}}$.
\end{itemize}
\end{definition}

\noindent\textbf{Nonleft-linear Rules.}
The trace slicing technique we described so far %
does not work for nonleft-linear TRS.
Consider the rule: $r :  f(x,y,x) \raw g(x,y)$
and the one-step trace  $\cT:f(a,b,a)\raw g(a,b)$.
If we are interested in tracing back the symbol $g$ 
that occurs in the final state $g(a,b)$, 
we would get the following trace slice 
$\cT^\bullet : f(\bullet, \bullet, \bullet) \rightarrow g(\bullet,\bullet)$. 
However, $f(a,b,b)$ is a concretization of  $f(\bullet, \bullet,\bullet)$ 
that cannot be rewritten by using $r$. 
In the following, we  augment  Definition~\ref{def:collapsext} in order to also
deal with nonleft-linear rules.

\begin{definition}\label{def:nllext}
(rewrite step labeling for nonleft-linear rules)
Let $\mu : t\stackrel{r,\sigma}{\rightarrow}s$ be a rewrite step s.t.\ $\sigma=\{x_1/t_1,..,x_n/t_n\}$, where $r$ is a nonleft-linear rule.
Let $L_\sigma=\{L_{x_1/t_1},.., L_{x_n/t_n}\}$ be a labeling for the substitution $\sigma$.
In order to label the step $\mu$, we further extend the labeling  procedure 
formalized in  Definition \ref{def:collapsext}  as follows:
\begin{itemize}
\item 
[$s_5$.] For each variable $x_j$ that occurs more than once in the left-hand side of the rule $r$, the following steps must be followed:
 \begin{itemize} \item we form a new label $l_{x_j}$ by joining all the labels in  $\mathit{Cod}(L_{x_j/t})$ where $L_{x_j/t}\in L_\sigma$;
\item 
let $l_s$ be the label of the root symbol of $s$.
Then, $l_s$ is replaced by a new  composite label  $l_{x_j}l_s$.

\end{itemize}
\end{itemize}
\end{definition}
Note that, whenever a rewrite step $\mu$ involves the application of  a rule that is both collapsing and non left-linear, the labeling for $\mu$ is obtained by sequentially applying step $s_4$ of Definition \ref{def:collapsext} and step $s_5$ of Definition \ref{def:nllext} (over the labeled rewrite step resulting from $s_4$).

\begin{example}
Consider the labeled, collapsing and nonleft-linear rule 
${f^\beta (x,y,x) \rightarrow y}$ 
together with the rewrite step  
${\mu: h( f(a, b, a), b )  \rightarrow h( b, b)}$, and matching substitution 
$\sigma = \{x/a,y/b\}$.
Let $L_{h(\Box,b)} = \{ \Lambda\mapsto\alpha, 2\mapsto\epsilon \}$ be the labeling for the context $h(\Box,b)$.
Then, for the labeling 
$L_\sigma=\{ L_{x/a},~ L_{y/b} \}$, with
$L_{x/a} = \{ \Lambda\mapsto\gamma \}$ and 
$L_{y/b} = \{ \Lambda\mapsto\delta \}$,
the labeled version of $\mu$ is  %
$h^\alpha ( f^\beta(a^\gamma, b^\delta, a^\gamma), b^\epsilon ) \rightarrow 
h^\alpha( b^{\beta\gamma\delta}, b^\epsilon )$. %
Finally, by considering the criterion $\{1 \}$, we can safely trace back the symbol $b$ of the sliced final state $h( b, \bullet)$
and obtain the following trace slice
$$h(f(g(a),b,g(a)), \bullet) \rightarrow h( b, \bullet) .$$ 
\end{example}

\subsection{Built-in Operators}\label{sec:orden-total}

In practical implementations of RWL (e.g., Maude \cite{maude-book}), several commonly used operators  are pre-defined (e.g., arithmetic operators, if-then-else constructs), which do not have an explicit specification.
To overcome this limitation, we further extend our labeling process in order to deal with built-in operators.
\begin{definition}\label{def:nbuilt-in}
(rewrite step labeling  for built-in operators)
For the case of a rewrite step  $\mu: C[op(t_1,\ldots,t_n)] \raw C[t']$  involving a call to a built-in, $n$-ary operator $op$,   
 we extend Definition \ref{def:nllext} by introducing the following additional case:
\begin{itemize}
\item [$s_6$.] 
Given   an initial labeling $L_{op}$  for the term  $op(t_1,\ldots,t_n)$,
\begin{itemize}                
\item 
each symbol occurrence in $t'$ is labeled with  a new label %
that is formed by joining the labels of all the (labeled) arguments $t_1,\ldots,t_n$ of $op$;
\item the remaining symbol occurrences of $C[t']$ that are not considered in the previous step inherit all the labels appearing in $C[op(t_1,\ldots,t_n)] $.%
\end{itemize}
\end{itemize}
\end{definition}
For example, by applying Definition \ref{def:nbuilt-in}, the addition of two natural numbers implemented through the built-in operator $+$ might be labeled as  ${+^\alpha(7^\beta,8^\gamma) \rightarrow 15^{\beta\gamma}}$.%

\subsection{Associative-Commutative Axioms}
Let us finally consider an extended rewrite theory $\cR=(\Sigma,\Delta\cup B,R)$, where $B$ is a set of associativity (A) and commutativity (C) axioms that hold for some function symbols in $\Sigma$.   
Now, since $B$ only contains associativity/commutativity (AC) axioms, terms can be  represented  by means of a single representative of their AC congruence class, called {\em AC canonical form}~\cite{Eke03}.
This representative is obtained by replacing nested occurrences of the same AC operator by a flattened argument list under a variadic symbol, whose elements are sorted by means of some linear ordering~\footnote{Specifically, Maude uses the lexicographic order of symbols.}.
The inverse process to the flat transformation is the unflat transformation, which is nondeterministic (in the sense that it generates all the unflattended terms that are equivalent (modulo AC) to the  flattened term)~\footnote{These two processes are typically hidden inside the $B$-matching algorithms  that are used to implement rewriting modulo $B$. See~\cite{maude-book} (Section~$4.8$) for an in-depth discussion on matching and simplification modulo AC in Maude.}.

For example, consider a binary AC operator $f$  together with the standard lexicographic ordering over symbols. 
Given the $B$-equivalence $f(b,f(f(b,a),c)) =_B f(f(b,c),f(a,b))$, we can represent it by using the ``internal sequence'' $f(b,f(f(b,a),c)) \raw^*_{\mathit{flat}_B} f(a,b,b,c) \raw^*_{\mathit{unflat_B}} f(f(b,c),f(a,b))$, where the first one corresponds to the {\em flattening} transformation sequence that obtains the AC canonical form, while the second one corresponds to the inverse, unflattening~one.

The key idea for extending our labeling procedure in order to cope with $B$-equivalence $=_B$ is to exploit the flat/unflat transformations
mentioned above. 
Without loss of generality,
we assume that flat/unflat transformations are stable w.r.t.\ the lexicographic ordering over positions $\sqsubseteq$\footnote{The lexicographic ordering $\sqsubseteq$ is defined as follows: 
$\Lambda \sqsubseteq w$ for every position $w$, and given the positions $w_1 = i.w_1'$ and $w_2= j.w_2'$, 
$w_1 \sqsubseteq w_2$ iff $i < j$ or ($i = j$ and $w_1' \sqsubseteq w_2'$).
Obviously, in a practical implementation of our technique, the
considered ordering among the terms should be chosen to
agree with the ordering considered by flat/unflat
transformations in the RWL  infrastructure.}. %
This assumption %
allows us to trace back arguments of commutative operators, since multiple occurrences of the same symbol can be precisely identified.

\begin{definition}\label{def:nAC}
(AC Labeling.)
Let $f$ be an associative-commutative operator and  $B$ be the AC  axioms for $f$.
Consider the
$B$-equivalence $t_1 =_B t_2$ and the corresponding (internal) flat/unflat transformation $\cT: t_1 \raw^*_{\mathit{flat}_B} s \raw^*_{\mathit{unflat}_B} t_2$.
Let $L$ be an initial labeling for $t_1$. 
The labeling procedure for $t_1 =_{B} t_2$ is as follows.

\begin{enumerate}
\item \label{flat} (flattening) For each flattening transformation step
$t_{|v} \raw_{\mathit{flat}_B} t'_{|v}$ in $\cT$ for the symbol $f$,
a new label $l_f$ is formed by joining all the labels attached to the symbol $f$ in any position $w$ of
$t_{}^L$ s.t.\   %
${w = v}$ or  
$w\geq v$, and
every symbol on the path from $v$ to $w$ is $f$;
then, label $l_f$ is attached to the root symbol of~$t'_{|v}$. 

\item \label{unflat} (unflattening)  For each unflattening transformation step $t_{|v} \raw_{\mathit{unflat}_B} t'_{|v}$ in $\cT$ for the symbol $f$,
the label of the symbol $f$ in the position $v$ of $t^L$ is attached to the symbol $f$ in any position $w$ of $t'$
such that ${w = v}$ or
$w\geq v$, and every symbol on the path from $v$ to $w$ is $f$.

\item
The remaining symbol occurrences  in $t'$ that are not considered in cases \ref{flat} or \ref{unflat} above
inherit the label of the corresponding symbol occurrence in $t$. %

\end{enumerate}

\end{definition}

\begin{example}
Consider the transformation sequence %
$$\small \begin{array}{c}
f(b,f(b,f(a,c))) \raw^*_{\mathit{flat}_B} f(a,b,b,c) \raw^*_{\mathit{unflat}_B}    f(f(b,c),f(a,b))
   \end{array}
$$
by using Definition \ref{def:nAC}, the associated transformation sequence can be labeled as follows:
$$\small
\begin{array}{ll}
 f^\alpha(b^\beta,f^\gamma(b^\delta,f^\epsilon(a^\zeta,c^\eta))) \raw^*_{\mathit{flat}_B}& f^{\alpha\gamma\epsilon}(a^\zeta,b^\beta,b^\delta,c^\eta) \raw^*_{\mathit{unflat}_B} 
 \\ & \hspace*{1cm} 
 f^{\alpha\gamma\epsilon}(f^{\alpha\gamma\epsilon}(b^\beta,c^\eta),f^{\alpha\gamma\epsilon}(a^\zeta,b^\delta))
\end{array}
$$%
Note that the original order between the two occurrences of the constant $b$ is not changed by the flat/unflat transformations.
For example, in the first term, $b^\beta$ is in position~$1$ and $b^\delta$ is in position $2.1$ with $1 \sqsubseteq 2.1$, 
whereas, in the last term, $b^\beta$ is in position~$1.1$ and $b^\delta$ is in position $2.2$ with $1.1 \sqsubseteq 2.2$.
\end{example}

Finally, note that the methodology described in this section can be easily extended to deal with other equational attributes, e.g., identity (U), by explicitly encoding the internal transformations performed %
via suitable rewrite rules.

\subsection{Extended Soundness}

Soundness of the backward trace slicing algorithm for the extended rewrite theories is established by the following
theorem which properly extends Theorem~\ref{prop:reproduded}. The proof of such an extension can be found in Appendix~\ref{app:proof}. 

\begin{theorem}\label{prop:reproduded2} (extended soundness) 
Let $\cR = (\Sigma, E, R)$ be an extended rewrite theory. 
Let $\cT$ be an  execution trace in the rewrite theory $\cR$, and let $\cO$ be a slicing criterion for $\cT$.
Let $\cT^\bullet : t_0^\bullet \stackrel{r_1}{\rightarrow} t_1^\bullet  \ldots \stackrel{r_n}{\rightarrow} t_n^\bullet $ be the corresponding  trace slice w.r.t.\ $\cO$. 
Then, for any concretization  
$t_0'$ of $t_0^\bullet$, 
it holds that $\cT':t_0' \stackrel{r_1}{\rightarrow} t_1' \ldots \stackrel{r_n}{\rightarrow} t_n'$
is an  execution trace in $\cR$, and   $t_i^\bullet \propto t_i'$, for $i=1,\ldots,n$.
\end{theorem}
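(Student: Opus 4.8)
The plan is to prove the statement by induction on the length $n$ of the execution trace $\cT$, reusing the skeleton of the proof of Theorem~\ref{prop:reproduded}. Since, as observed at the beginning of Section~\ref{sec:slicing-m-e}, the extensions for collapsing/nonleft-linear rules, built-in operators, and associative-commutative axioms only modify the labeling procedure of Section~\ref{sec:labeling} and leave the backward tracing relation of Definition~\ref{def:tracing} and the slicing algorithm of Definition~\ref{def:bak-dyn-tech} untouched, the whole burden of the proof reduces to re-establishing, for each kind of extended step, the single invariant on which Theorem~\ref{prop:reproduded} rests: that the term slice $t_{i-1}^\bullet$ retains the \emph{redex pattern} of the rule $r_i$ applied at step $i$, together with whatever additional structure a concretization needs in order to fire $r_i$ again (the shared-variable bindings for nonleft-linear rules, the surviving subterm for collapsing rules, the argument list for built-ins, and the $B$-matching layout for $=_B$).

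First I would recall, from Section~\ref{sec:rewrite-m-e}, that every rewrite step modulo $E$ of $\cT$ expands into a finite sequence of primitive steps: oriented equational steps with $\Delta$, a single rule application with $R$ modulo $B$, and the internal $\mathit{flat}_B$/$\mathit{unflat}_B$ transformations that realize $=_B$. Hence it suffices to prove a single-step preservation lemma for each primitive kind of step and then glue the instances together. For an ordinary (left-linear, noncollapsing) rule or equation this is exactly the content already used in Theorem~\ref{prop:reproduded}, namely that by step $r_2$ of Definition~\ref{def:ruleLabel} the label of every symbol of the contractum pattern is the join of all redex-pattern labels, so that as soon as one contractum position is relevant in $t_i$, Definition~\ref{def:tracing} traces back \emph{all} redex-pattern positions into $t_{i-1}$, whence the redex pattern survives in $t_{i-1}^\bullet$.

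The core of the extended proof is to verify the same conclusion for each new labeling case. For collapsing rules (Definition~\ref{def:collapsext}) the composite label $l_c l_i$ placed on the root of the surviving subterm guarantees that, whenever that subterm is observed, $l_c$ --- the join of the redex-pattern labels --- is traced back, so the redex pattern is again recovered. For nonleft-linear rules (Definition~\ref{def:nllext}) the label $l_{x_j}l_s$ on the root of $s$ forces the backward tracing to retain, in $t_{i-1}^\bullet$, the full binding of every repeated variable $x_j$ at each of its occurrences; since Definition~\ref{def:tracing} always follows the path from the root, these positions are kept whenever any position is observed, so the slice records the equality constraint and no inconsistent concretization (such as the $f(a,b,b)$ counterexample) can arise. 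For built-in operators (Definition~\ref{def:nbuilt-in}) the label of $t'$ is the join of the argument labels, so observing the result forces all arguments to be relevant and the call $op(t_1,\ldots,t_n)$ is reproducible. With these single-step lemmas in hand, the inductive step is routine: assuming $t_{i-1}^\bullet \propto t_{i-1}'$, the preserved redex pattern and the extra constraints appear in $[t_{i-1}^\bullet]$ and hence in $t_{i-1}'$, so $r_i$ applies to $t_{i-1}'$ producing some $t_i'$; instantiating the positions that slicing kept shows $t_i^\bullet \propto t_i'$, and chaining over $i$ yields the concrete trace $\cT'$.

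The hard part will be the associative-commutative case (Definition~\ref{def:nAC}). Unlike the other extensions, an $=_B$ step neither contracts a redex nor creates new function symbols; it merely re-brackets and re-orders a flattened argument multiset. Proving preservation therefore amounts to showing that the labeling of Definition~\ref{def:nAC} is faithful to the \emph{identity} of each argument occurrence across flattening and unflattening, so that after slicing the surviving arguments can still be arranged into a term that $B$-matches the redex of $r_i$. The delicate point is the treatment of several syntactically equal arguments (the two copies of $b$ in the running example): here one must lean on the assumption that the $\mathit{flat}_B$/$\mathit{unflat}_B$ transformations are stable w.r.t.\ the positional ordering $\sqsubseteq$ to guarantee that backward tracing maps each copy to its correct source position, and one must check that this positional bookkeeping interacts correctly with the removal function of Definition~\ref{def:termSlice}, which collapses non-relevant arguments to $\bullet$ without disturbing the relative order of the relevant ones.
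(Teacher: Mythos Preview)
Your proposal is correct and follows essentially the same approach as the paper: both reduce the theorem to a single-step preservation result for each kind of extended step (collapsing, nonleft-linear, built-in, AC), arguing in each case that the extended labeling forces Definition~\ref{def:tracing} to pull back the full redex pattern (plus, for nonleft-linear rules, the shared bindings), and then conclude by a routine induction on the length of the trace slice. The paper packages this via explicit extensions of Lemma~\ref{le:singlepos}, Lemma~\ref{le:relevant}, and Proposition~\ref{pr:onestep}, whereas you state the single-step lemmas directly, but the content is the same; your more cautious treatment of the AC case is in fact more detailed than the paper's own rather terse handling of $\mathit{flat}_B/\mathit{unflat}_B$.
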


\section{Experimental Evaluation}\label{sec:exp}

We have developed a prototype implementation of our slicing methodology that is publicly available at \url{http://www.dsic.upv.es/~dromero/slicing.html}. 
The implementation is written in Maude and consists of approximately 800 lines of code.
Maude is a high-performance, reflective language that supports both equational and rewriting logic programming, which is particularly suitable for developing domain-specific applications~\cite{EMS03}.
The reflection capabilities of Maude allow metalevel computations in RWL to be handled at the object-level. 
This facility allows us to easily manipulate computation traces of Maude itself and eliminate the irrelevant contents by implementing the backward slicing procedures that we have defined in this paper.
Using reflection to implement the slicing tool has one important
additional advantage, namely, the ability to quickly integrate the tool within the Maude formal tool environment \cite{ClavelDHLMO07}, 
which is also developed using reflection.

We developed the operator {\tt slice} that implements the slicing process.
This operator is invoked as follows: 
$$\tt slice(\langle moduleName \rangle, \langle initialState \rangle, \langle endState \rangle, \langle criterion \rangle)$$
where $\tt moduleName$ is the name of the Maude module that includes the rules and the equations  to be considered in the slicing process;
$\tt initialState$ and $\tt endState$  are the initial state and the final state, respectively, of the execution trace; 
and $\tt criterion$ is the slicing criterion.
The operator works as follows.
First, by considering the rules and equation in $\tt moduleName$, 
the instrumented execution trace stemming from the initial state that leads to the final state is computed. 
Then, the slicing  procedure is executed with the instrumented computation  trace and the slicing criterion   as inputs.
Finally, a pair that contains the sliced trace and the original execution trace is delivered as outcome of the process.

In order to evaluate the usefulness of our approach, we benchmarked our prototype with several examples of Maude applications, namely:
{\em War of Souls} ({\tt WoS}), a role-playing game that is modeled as a nontrivial producer/consumer application;
{\em Fault-Tolerant Communication Protocol} ({\tt FTCP}), a Maude specification %
that models a fault-tolerant, client-server communication protocol; and 
Web-TLR, a software tool designed for model-checking real-size Web applications (e.g., Web-mailers, Electronic forums), which is based on rewriting logic.

We have tested our tool on some execution traces that were generated by the Maude applications described above by imposing different slicing criteria. For each application, we considered two execution traces that were sliced using two different criteria.
As for the $\mathtt{WoS}$ example, we have chosen criteria that allow us to backtrace both the values produced and the entities in play --- e.g., the criterion $\mathtt{WoS}.\cT_1.O_2$ isolates players' behaviors along the trace $\cT_1$.
Execution traces in the \texttt{FTCP} example represent client-server interactions. 
In this case, the chosen criteria aim at isolating a server and a client in a scenario that involves  multiple servers and clients ($\mathtt{FTCP}.\cT_2.O_1$), and tracking the response generated by a server according to a given client request ($\mathtt{FTCP}.\cT_1.O_1$).
In the last example, we have used Web-TLR to verify two LTL(R) properties of a Webmail application. The considered execution traces are much bigger for this program, and correspond to the counterexamples produced as outcome by the built-in model-checker of Web-TLR. In this case, the chosen criteria allow us to monitor the messages exchanged by the Web browsers and the Webmail server, as well as to focus our attention on the data structures of the interacting entities (e.g., browser/server sessions, server database).

Table \ref{tab:exp} summarizes the results we achieved.
For each criterion, Table~\ref{tab:exp} shows the size of the original trace and of the computed trace slice, both measures as the length of the corresponding string.
The {\em \%reduction} column shows the percentage of reduction achieved.
These results are very encouraging, and show an impressive reduction rate (up to $\sim 95\%$). %
Actually, sometimes the trace slices are small enough to be easily inspected by the user, who can restrict her attention 
to the part of the computation she wants to observe getting rid of  those data that are useless or even noisy w.r.t.\ 
the considered slicing criterion. 

\begin{table}[t!]
\begin{center}
{\small
\begin{tabular} {|c|c|c|c|c|c|} 
\hline
\multirow{2}{*}{\bf Example} &
{\bf Example} &
{\bf Original} &
{\bf Slicing} &
{\bf Sliced} &
{\bf\em \% }\\
 & {\bf trace} & {\bf trace size} & {\bf criterion} & {\bf trace size} & {\bf \em reduction}\\
\hline

\multirow{4}{*}{ $\mathtt{WoS}$ } 
& \multirow{2}{*}{ $\mathtt{WoS}.\cT_1$ } & 
\multirow{2}{*}{ $776$ } & {\scriptsize $\mathtt{WoS}.\cT_1.O_1$} & $201$ & $74.10\%$\\
\cline{4-6}
                     & & & {\scriptsize $\mathtt{WoS}.\cT_1.O_2$} & $138$ & $82.22\%$ \\
\cline{2-6}
& \multirow{2}{*}{ $\mathtt{WoS}.\cT_2$ } & 
\multirow{2}{*}{ $997$ } & {\scriptsize $\mathtt{WoS}.\cT_2.O_1$} & $404$ & $58.48\%$\\
\cline{4-6}
                     & & & {\scriptsize $\mathtt{WoS}.\cT_2.O_2$} & $174$ & $82.55\%$ \\
\hline

\multirow{4}{*}{ $\mathtt{FTCP}$ } 
& \multirow{2}{*}{ $\mathtt{FTCP}.\cT_1$ } & 
\multirow{2}{*}{ $2445$ } & {\scriptsize $\mathtt{FTCP}.\cT_1.O_1$} & $895$ & $63.39\%$\\
\cline{4-6}
                     & & & {\scriptsize $\mathtt{FTCP}.\cT_1.O_2$} & $698$ & $71.45\%$ \\
\cline{2-6}
& \multirow{2}{*}{ $\mathtt{FTCP}.\cT_2$ } & 
\multirow{2}{*}{ $2369$ } & {\scriptsize $\mathtt{FTCP}.\cT_2.O_1$} & $364$ & $84.63\%$\\
\cline{4-6}
                     & & & {\scriptsize $\mathtt{FTCP}.\cT_2.O_2$} & $707$ & $70.16\%$ \\
\hline

\multirow{4}{*}{ $\mathtt{Web\mbox{-}TLR}$ } 
& \multirow{2}{*}{ $\mathtt{Web\mbox{-}TLR}.\cT_1$ } & 
\multirow{2}{*}{ $31829$ } & {\scriptsize $\mathtt{Web\mbox{-}TLR}.\cT_1.O_1$} & $1949$ & $93.88\%$\\
\cline{4-6}
                     & & & {\scriptsize $\mathtt{Web\mbox{-}TLR}.\cT_1.O_2$} & $1598$ & $94.97\%$ \\
\cline{2-6}
& \multirow{2}{*}{ $\mathtt{Web\mbox{-}TLR}.\cT_2$ } & 
\multirow{2}{*}{ $72098$ } & {\scriptsize $\mathtt{Web\mbox{-}TLR}.\cT_2.O_1$} & $9090$ & $87.39\%$\\
\cline{4-6}
                     & & & {\scriptsize $\mathtt{Web\mbox{-}TLR}.\cT_2.O_2$} & $7119$ & $90.13\%$ \\
\hline

\end{tabular} }
\vspace{.2cm}
\caption{Summary of the reductions achieved.}\label{tab:exp}
\end{center}

\end{table}

\section{Conclusion and Related Work}\label{sec:related}

We have presented a backward trace-slicing technique for rewriting logic theories.
The key idea consists in tracing back ---through the rewrite sequence--- all the relevant symbols of the final state that we are interested in.
Preliminary experiments
demonstrate that the system works very satisfactorily on our benchmarks ---e.g., we obtained  trace slices that achieved a reduction of up to almost $95\%$ in reasonable time (max. 0.5s  on a Linux box equipped with an Intel Core 2 Duo 2.26GHz and  4Gb of RAM memory).

Tracing techniques have been extensively used in functional programming  for implementing debugging tools~\cite{CRW00}.
For instance, Hat~\cite{CRW00}
is an interactive debugging system that enables exploring a computation backwards, starting from the program output or an error message (with which the computation aborted).
Backward tracing in Hat is carried out by navigating a redex trail (that is, a graph-like data structure that records dependencies among function calls), whereas  tracing in our approach does not require the construction of any auxiliary data structure. 

Our backward tracing relation %
extends a previous tracing relation that was formalized in~\cite{BKV00} for orthogonal TRSs. 
In~\cite{BKV00}, a label is formed from atomic labels by using the operations of sequence concatenation and underlining (e.g., $a$, $b$, $ab$, $\underline{\underline{ab}c}d$, are labels), which are used to 
keep track of the rule application order. 
Collapsing rules are simply avoided %
by coding them away. This is done by replacing  each collapsing rule $\lambda\rightarrow x$ with the rule $\lambda\rightarrow \varepsilon(x)$, where $\varepsilon$ is a unary dummy symbol.
Then, in order to lift the rewrite relation to terms containing $\epsilon$ occurrences, infinitely many new extra-rules are added that are built by saturating all left-hand sides with $\varepsilon(x)$.
In contrast to~\cite{BKV00}, we use a simpler notion of labeling, where composite labels are interpreted as sets of atomic labels, and in the case
of collapsing as well as nonleft-linear rules we label the rewrite steps themselves so that we can
deal with these rules in an effective way.

The work that is most closely related to ours is~\cite{FT94}, which
formalizes a notion of dynamic dependence among symbols by means of contexts  %
and studies its application to program slicing of TRSs that may include collapsing as well as nonleft-linear rules.
Both the {\em creating} and the {\em created} contexts associated with a reduction (i.e., the minimal subcontext that is needed to match the left-hand side of a rule and the minimal context that is ``constructed'' by the right-hand side of the rule, respectively)  are tracked.
Intuitively, these concepts are similar to our notions of redex and contractum patterns.
The main differences with respect to our work are as follows. 
First, in \cite{FT94} the slicing is given as a context, while we consider term slices. 
Second, the slice is obtained only on the first term of the sequence by the transitive and reflexive closure of the dependence relation, while  we  slice the whole execution trace, step by step. Obviously, their 
notion of slice is smaller, but we think that our approach can be more useful for trace analysis and program debugging. 
An extension of~\cite{BKV00} is described in~\cite{Terese03book}, 
which provides a generic definition of labeling that works not only for orthogonal TRSs as is the case
of~\cite{BKV00} but for the wider class of all left-linear TRSs. The nonleft-linear case is not handled by \cite{Terese03book}.
Specifically, \cite{Terese03book}~describes a methodology of static and dynamic tracing that is mainly based on the notion of {\em sample of a traced proof term} ---i.e., a pair $(\mu,P)$ that records a rewrite step $\mu = s \rightarrow t$, and a set $P$ of reachable positions in $t$ from  a set of observed positions in $s$.
The  tracing proceeds forward, while ours employs a backward strategy that is particularly convenient for error diagnosis and program debugging. 
Finally, \cite{FT94} and \cite{Terese03book} apply to TRSs whereas  we deal with the richer framework of RWL that considers equations and equational axioms, %
namely rewriting modulo equational theories.

\bibliographystyle{splncs03}

\appendix
\newpage

\section{Proofs of Theorems \ref{prop:reproduded} and 2}\label{app:proof}

\subsection*{Proof of Theorem \ref{prop:reproduded}}\label{app:proof}

We first demonstrate some auxiliary results which 
facilitate the proof of Theorem~\ref{prop:reproduded}.
The following auxiliary result is straightforward.

\medskip
\begin{lemma}\label{le:instance} 
Let $t^\bullet$ be a term slice, and let $t'$ be a term such that $t^\bullet \propto t'$.
For every position 
$w \in \pos(t')$, it holds that, either $root(t'_{|w}) = root(t^\bullet_{|w})$, or there exists a position $u$ of $t^\bullet$ such that $u\leq w$ and $root(t^\bullet_{|u}) = \bullet$. %
\end{lemma}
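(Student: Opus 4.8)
The plan is to unfold the definition of the concretization relation $\propto$ and then reason along the path from the root of $t'$ down to $w$. By Definition~\ref{def:app}, $t^\bullet \propto t'$ means that $[t^\bullet]\sigma = t'$ for some substitution $\sigma$, where $[t^\bullet]$ is obtained from $t^\bullet$ by replacing every occurrence of $\bullet$ with a fresh, pairwise-distinct variable. The first observation I would record is that replacing $\bullet$ by a variable (both are leaves) does not alter the tree shape, so $\pos(t^\bullet) = \pos([t^\bullet])$, and the positions at which $[t^\bullet]$ carries a variable are exactly the positions $u$ with $root(t^\bullet_{|u}) = \bullet$. Applying $\sigma$ then only expands the term strictly below those variable positions, leaving every function-symbol position untouched.

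With this setup, I would split on whether the path to $w$ meets a $\bullet$. If there exists a prefix $u \leq w$ with $u \in \pos(t^\bullet)$ and $root(t^\bullet_{|u}) = \bullet$, we are immediately in the second disjunct and nothing remains to prove. Otherwise, I would prove the first disjunct by induction on the length of the prefix $u$ of $w$, establishing the invariant that $u \in \pos(t^\bullet)$, that $root(t^\bullet_{|u})$ is a (non-$\bullet$) function symbol, and that $root(t'_{|u}) = root(t^\bullet_{|u})$. The base case $u = \Lambda$ is immediate, since $\sigma$ cannot rewrite the top function symbol of $[t^\bullet]$. The inductive step descends from $u$ to $u.i \leq w$, using that the function symbol at $u$ has a fixed arity $k$, so that $u.i \in \pos(t')$ forces $i \leq k$, hence $u.i \in \pos([t^\bullet]) = \pos(t^\bullet)$, and that subterm extraction commutes with $\sigma$ along function-symbol positions, i.e.\ $t'_{|u.i} = ([t^\bullet]_{|u.i})\sigma$. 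Since $t^\bullet_{|u.i}$ is not $\bullet$ (we are in the no-$\bullet$ case), its root is a function symbol preserved by $\sigma$. Instantiating the invariant at $u = w$ yields $root(t'_{|w}) = root(t^\bullet_{|w})$, which is the first disjunct.

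The only genuinely delicate point, and the one I would write out carefully, is the guarantee that in the no-$\bullet$ case the position $w$ actually belongs to $\pos(t^\bullet)$: a priori $t'$ could have more positions than $t^\bullet$. This is exactly what the inductive step rules out, since below a non-$\bullet$ function symbol the substitution $\sigma$ neither adds nor removes children (the arity is fixed), so the only way $t'$ can acquire positions outside the skeleton of $t^\bullet$ is strictly below a variable, i.e.\ strictly below a $\bullet$ — which the case hypothesis forbids on the path to $w$. Everything else is a routine commutation of $(\cdot)_{|u}$ with substitution application, so the result is indeed straightforward once the definition of $\propto$ has been expanded.
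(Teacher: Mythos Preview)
Your proposal is correct and follows the same approach as the paper, which dismisses the lemma in a single line as ``Immediate by Definition~\ref{def:app}.'' You have simply spelled out in detail the case split and the path-induction that justify that immediacy; the paper records none of this, so your write-up is strictly more informative while remaining faithful to the intended argument.
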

\begin{proof}
Immediate by Definition \ref{def:app}.
\qed
\end{proof}

The following definitions are auxiliary.
Let $C$ be a context. We define the set of positions of  $C$ as the  set $\pos(C)=\{v\mid root(C_{|v})\neq\Box\}$. 
Given a term $t$, by $path_w(t)$, %
we denote the set of symbols in $t$ that occur in the path from its root to the position $w$ of $t$, e.g., 
$path_{(2.1)}({f(a,g(b),c)})= \{f, g, b\}$.

\begin{definition}
Let $r:\lambda \raw \rho$ be a rule of $\cR$.
Let $\mu:s\stackrel{r,\sigma}{\raw} t$ be a rewrite step such that $s=C[\lambda\sigma]$ and $t=C[\rho\sigma]$.
Given a position $w$, we say that $w$ \emph{is involved in} $\mu$, if there exist $w'$ and $w''$ such that $w=w'.w''$, $C_{|w'} = \Box$ and $w''\in\pos(\rho\sigma)$.
\end{definition}

The following lemma establishes that, if a relevant position is involved in a rewrite step, then the origin position relation preserves the redex pattern of the rule.

\begin{lemma}\label{le:singlepos}
Let $r:\lambda \raw \rho$ be a rule of an elementary rewrite theory $\cR$.
Let $\mu:s\stackrel{r,\sigma}{\raw} t$ be a rewrite step such that $s=C[\lambda\sigma]$ and $t=C[\rho\sigma]$, where $\sigma$ is a substitution
and $C$ is a context.
Let $L$ be a labeling for the rewrite step $\mu$, and $w\in \pos(t)$.

\begin{enumerate}
\item if $w\in\pos(C)$, then $\lhd_\mu^L w=\{v\in\pos(C)\mid w=v.v'\}$
\item if $w=w'.w''$,  $C_{|w'} = \Box$, and $w''\in\pos(\rho\sigma)$, then $\lhd_\mu^L w\supseteq \{w'.v'\in\pos(s)\mid v'\in\pos(\lambda)\}$
\end{enumerate}
\end{lemma}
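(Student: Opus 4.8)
The plan is to exploit the explicit description of the step labeling $L=L_\mu$ from Definition~\ref{def:labelled-step}, which partitions the positions of $s=C[\lambda\sigma]_q$ and $t=C[\rho\sigma]_q$ (write $q$ for the single hole of $C$) into four groups: (i) the context positions $\pos(C)$, which in both $s$ and $t$ carry the same fresh, pairwise-distinct atomic labels supplied by $L_C$; (ii) the redex-pattern positions $q.u$ with $u\in\pos(\lambda^\Box)$, which in $s$ carry pairwise-distinct atomic labels from $L_r$; (iii) the contractum-pattern positions $q.u$ with $u\in\pos(\rho^\Box)$, which in $t$ all carry the single composite label $l$ that, following Definition~\ref{def:ruleLabel}, collects all the atomic labels occurring in the labeled redex pattern; and (iv) the substitution positions lying at or below a variable position of $\lambda$ (resp.\ $\rho$), which carry fresh atomic labels from $L_\sigma$ and are shared between $s$ and $t$ whenever the variable occurs in both sides. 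Throughout I use three structural facts: $\mathit{Cod}(L_C)$, $\mathit{Cod}(L_r)$ and $\mathit{Cod}(\sigma)$ are pairwise disjoint; every label produced by an initial labeling is a singleton atomic label attached to a unique position; and $\pos(C)$ is prefix-closed.

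For item~1, let $w\in\pos(C)$. Since $\pos(C)$ is prefix-closed, every $p\leq w$ is itself a context position, so in $t$ it carries a singleton context label $l_p=\{\alpha_p\}$, and distinct context positions carry distinct such labels. I then characterise the origins via Definition~\ref{def:tracing}: if $v\in\pos(s)$ satisfies $l_v\subseteq l_p$ for some $p\leq w$, then, $l_p$ being a singleton and $l_v$ nonempty, $l_v=\{\alpha_p\}$; by the uniqueness of context labels and their disjointness from the redex- and substitution-label codomains, the only position of $s$ bearing $\{\alpha_p\}$ is $p$ itself, forcing $v=p$. Conversely every context prefix $p\leq w$ qualifies with witness $p$. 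Hence $\lhd_\mu^L w=\{v\in\pos(C)\mid v\leq w\}$, which is exactly $\{v\in\pos(C)\mid w=v.v'\}$.

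For item~2 I take $w=q.w''$ with $w''\in\pos(\rho\sigma)$ (so $w'=q$) and prove the inclusion by splitting $\pos(\lambda)$ into its function positions $\pos(\lambda^\Box)$ and its variable positions. The function positions are handled by the single witness $p=q$: since $r$ is non-collapsing, $root(\rho)\notin\Variables$, so $q$ is the root of the contractum pattern and carries the composite label $l$; as $q\leq q.w''=w$, it suffices to note that for every $v'\in\pos(\lambda^\Box)$ the source position $q.v'$ is a redex-pattern position whose singleton atomic label is one of those collected in $l$, whence $l_{q.v'}\subseteq l=l_q$ and $q.v'\in\lhd_\mu^L w$.

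The main obstacle is the variable positions, i.e.\ positions $v'$ with $\lambda_{|v'}=x_j$, for which the witness $p=q$ no longer works because the source label of $q.v'$ is a substitution label, disjoint from $l$. Here the plan is to reuse the label-sharing of group~(iv): the root of the matched subterm $t_j=x_j\sigma$ carries a fresh atomic label $\beta_j$ in $s$, and by step $s_3$ of Definition~\ref{def:labelled-step} the \emph{same} labeling is installed on every copy of $t_j$ that the rule inserts into the contractum at the occurrences of $x_j$ in $\rho$, so that $\beta_j$ reappears in $t$ precisely at the roots $q.u$ of those copies. The argument is then completed by exhibiting, for the given $v'$, an occurrence $u$ of $x_j$ in $\rho$ with $q.u\leq w$ and taking the witness $p=q.u$, which gives $l_{q.v'}=\{\beta_j\}=l_p$. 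I expect this step to be where the real care is needed: it relies on left-linearity (so that $x_j$ matches a single subterm and the substitution labeling is unambiguous) and on non-collapsing (so that the contractum into which the substitution is embedded is genuinely present), and it requires tracking how the observed descendant position $w''$ relates to the occurrences of $x_j$ in $\rho\sigma$ so that a suitable ancestor witness on the path to $w$ can be produced. Assembling the function-position and variable-position cases yields $\lhd_\mu^L w\supseteq\{q.v'\in\pos(s)\mid v'\in\pos(\lambda)\}$, which is the claimed inclusion since $w'=q$.
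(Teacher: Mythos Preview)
Your treatment of Claim~1 and of the function-symbol positions in Claim~2 is correct and essentially matches the paper's argument: both rely on the pairwise disjointness of $\mathit{Cod}(L_C)$, $\mathit{Cod}(L_r)$, $\mathit{Cod}(\sigma)$, and on the fact that, $r$ being non-collapsing, the contractum-pattern root $q$ lies on the path to $w$ and carries the composite label $l$ absorbing every redex-pattern atom.

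The gap is your variable-position case. You propose to exhibit, for a variable position $v'$ with $\lambda_{|v'}=x_j$, an occurrence $u$ of $x_j$ in $\rho$ with $q.u\leq w$; but such a $u$ need not exist. Take $r:f(x,y)\to g(x,y)$ with trivial context, $\sigma=\{x/a,y/b\}$, and $w=1$. With the obvious labeling one gets $s^L=f^\alpha(a^\beta,b^\gamma)$, $t^L=g^\alpha(a^\beta,b^\gamma)$, and $\lhd_\mu^L 1=\{\Lambda,1\}$. The position $v'=2$ of $y$ in $\lambda$ is \emph{not} an origin of $w$, so the inclusion you are trying to establish is false at variable positions in general. (Erasing rules, where $x_j\notin\Var(\rho)$, make this even starker.) No amount of ``tracking how $w''$ relates to the occurrences of $x_j$'' can repair this, because the witness you need simply may not lie on the path to $w$.

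The paper sidesteps the issue: its proof of Claim~2 only argues about ``labeled symbols $g^l$ in the redex pattern of $r$'', i.e.\ positions of $\lambda^\Box$, and this is also all that is used downstream in Lemma~\ref{le:relevant}, whose conclusion is merely that the redex \emph{pattern} is embedded in $t_0^\bullet$. So you should read $\pos(\lambda)$ in the statement as the non-variable positions (equivalently $\pos(\lambda^\Box)$), drop the variable-position paragraph, and your function-position argument already delivers exactly what the paper proves and needs.
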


\begin{proof}
Given the rule $r:\lambda\raw \rho$ and the labeling $L$ for the rewrite step $\mu:s\stackrel{r,\sigma}{\raw} t$, let us consider the labeled rewrite step 
$\mu^L:s^L\stackrel{\hspace{2mm}r^L,\sigma^L}{\raw} t^L$. %
By Definition \ref{def:labelled-step}, we can decompose the labeling $L$ into three labelings $L_C$, $L_r$, and $L_\sigma$ that respectively label the context $C$, the redex and the contractum patterns appearing in $\mu$,  and the terms in $\mu$ introduced by the substitution $\sigma$. 
In other words, we have  $s^L=C^{L_C}[\lambda^{L_r}\sigma^{L_\sigma}]$ and  $t^L=C^{L_C}[\rho^{L_r}\sigma^{L_\sigma}]$. 

Let us prove the two claims independently.

\noindent{\textbf{Claim 1.}} 
We assume that $w\in \pos(t)$ and $w\in\pos(C)$. 
Since the context $C$ has the same initial labeling $C^{L_C}$ in both $s$ and $t$, and 
the sets $Cod(L_C)$, $Cod(L_r)$, and $Cod(L_\sigma)$ are pairwise disjoint, the set of origin positions  $\lhd_{s\raw t}^L w$  in $s$ is the set of positions lying on the path from the root position of $s$ to $w$. Hence, $\lhd_\mu^L w=\{v\in\pos(C)\mid w=v.v'\}$.

\noindent{\textbf{Claim 2.}} We assume that  $w=w'.w''$, $C_{|w'} = \Box$, and $w''\in\pos(\rho\sigma)$. Then, since $r$ belongs to an elementary rewrite theory $\cR$, $r$ is non-collapsing. This implies that there exists a labeled symbol $f^{l'}\in path_w(t^{L})$ belonging to the contractum pattern of the rule $r$. 
By Definition \ref{def:ruleLabel}, for each labeled symbol $g^l$ in the redex pattern of $r$, we have that $l\subseteq l'$. 
Now, since the redex pattern of $r$ is embedded into $s$ and the contractum pattern of $r$ is embedded into  $t$, the inclusion $\lhd_\mu^L w\supseteq \{v.v'\in\pos(s)\mid v'\in\pos(\lambda)\}$ trivially holds by Definition~\ref{def:tracing}.
\qed
\end{proof}

The following lemma establishes that, 
given the rewrite step $\mu: t_0 \stackrel{r}\raw t_1 $ and
a term slice $t^{\bullet}_0$ of $t_0$,    any concretization  of  $t^{\bullet}_0$  is reduced by
the rule $r$  to the corresponding term slice concretization of $t_1$.

\begin{lemma}\label{le:relevant}
Let $r:\lambda\raw \rho$ be a rule of an elementary rewrite theory $\cR$.
Let $\mu:t_0\stackrel{r,\sigma}{\raw} t_1$ be a rewrite step such that $t_0=C[\lambda\sigma]$ and $t_1=C[\rho\sigma]$, where $\sigma$ is a substitution and $C$ is a context.
Let $L$ be a labeling for the rewrite step $\mu$, and $[P_0,P_1]$ be the sequence of  the relevant position sets for 
$\mu:t_0\stackrel{r,\sigma}{\raw} t_1$ w.r.t. the slicing criterion $\cO$. Let $t_0^\bullet=slice(t_0,P_0)$, and  $t_1^\bullet=slice(t_1,P_1)$.
\begin{enumerate}
\item if $P_1\subseteq\pos(C)$ then $t_0^\bullet=t_1^\bullet$.
\item if $P_1\cap\{w|w=v.v'$,  $C_{|v}=\Box$, and $v'\in \pos(\rho\sigma)\}\neq\emptyset$, then for any concretization 
$t_0'$ of $t_0^\bullet$, we have that $t'_0\stackrel{r,\sigma'}{\raw} t'_1$ where  
$t_1^\bullet \propto t_1'$.
\end{enumerate}
\end{lemma}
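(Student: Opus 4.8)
The plan is to prove the two claims separately, using throughout the observation --- immediate from Definition~\ref{def:termSlice} --- that a position $w$ survives in $slice(t,P)$ (i.e.\ is not collapsed to $\bullet$) exactly when $w$ is a prefix of some position in $P$; hence $slice(t,P)$ is fully determined by the prefix closure of $P$ together with the symbols of $t$ at those surviving positions. Recall also that $P_0=\bigcup_{p\in P_1}\lhd_\mu^L p$ by Definition~\ref{def:rlvSym}. For Claim~1, suppose $P_1\subseteq\pos(C)$. By Lemma~\ref{le:singlepos}(1), for each $w\in P_1$ the origin set $\lhd_\mu^L w$ consists exactly of the context ancestors of $w$, so $P_0$ is the prefix closure of $P_1$ inside $\pos(C)$. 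Since $t_0=C[\lambda\sigma]_q$ and $t_1=C[\rho\sigma]_q$ carry identical symbols at every position of $\pos(C)$, and since the hole position $q$ is a prefix of no element of $P_1$ (because $q\notin\pos(C)$ and $\pos(C)$ is prefix closed), both slices replace the entire subterm at $q$ by $\bullet$ and agree symbol-for-symbol elsewhere; hence $t_0^\bullet=t_1^\bullet$.

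For Claim~2, fix an involved position $w=q.v'\in P_1$ with $v'\in\pos(\rho\sigma)$. First I would show that $r$ remains applicable after slicing. By Lemma~\ref{le:singlepos}(2), $\lhd_\mu^L w\supseteq\{q.v''\mid v''\in\pos(\lambda)\}$, so $P_0$ contains $q$ together with every position of the redex pattern; consequently all function symbols of $\lambda$ survive in $t_0^\bullet$ at position $q$, and $[t_0^\bullet]$ exhibits the redex pattern $\lambda^\Box$ there. Since $\cR$ is elementary, $\lambda$ is linear, so $\lambda$ matches $[t_0^\bullet]_{|q}$; any concretization $t_0'$ of $t_0^\bullet$ then satisfies $t_0'_{|q}=\lambda\sigma'$ for a uniquely determined matcher $\sigma'$, yielding $t_0'=C'[\lambda\sigma']_q\stackrel{r,\sigma'}{\raw}C'[\rho\sigma']_q=t_1'$. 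Linearity is essential here: it is precisely what guarantees that an arbitrary instantiation of the sliced variable slots cannot block the match.

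It then remains to establish $t_1^\bullet\propto t_1'$, for which (by the surviving-position characterization, the collapsed positions being absorbed by the fresh variables of $[t_1^\bullet]$) it suffices to verify $root(t_1'_{|w})=root(t_1_{|w})$ at every position $w$ surviving in $t_1^\bullet$. I would partition these positions into three kinds. (i)~Context positions: combining Lemma~\ref{le:singlepos}(1) for the positions in $P_1\cap\pos(C)$ with the fact that $q\in P_0$ for those below $q$, every surviving context position of $t_1^\bullet$ also survives in $t_0^\bullet$, so $C'$ --- which agrees with $C$ on the surviving positions of $t_0^\bullet$ --- agrees with $C$ there. (ii)~Contractum-pattern positions below $q$: their symbols come from $\rho$ itself and are independent of the matcher, hence coincide in $t_1$ and $t_1'$. (iii)~Substitution positions below $q$: by step $s_3$ of Definition~\ref{def:labelled-step}, a symbol in a variable slot of the contractum carries the same label as the corresponding symbol in the unique occurrence of that variable in the redex, so the backward tracing relation of Definition~\ref{def:tracing} maps such a surviving position back to a surviving position inside $\lambda\sigma$ in $t_0^\bullet$; this symbol is preserved under concretization, whence $x_j\sigma'$ and $x_j\sigma$ agree there. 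Combining the three cases gives $t_1^\bullet\propto t_1'$.

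I expect case~(iii) to be the main obstacle: it is where the interaction between the tracing relation and the concretization must be controlled precisely, and where both left-linearity (making the variable occurrence in $\lambda$ unique, so that $\sigma'$ is well defined and traces back unambiguously) and non-collapsingness (already needed for Lemma~\ref{le:singlepos}(2)) are used crucially --- exactly the hypotheses packaged into ``elementary''. By contrast, Claim~1 and parts~(i)--(ii) of Claim~2 are essentially bookkeeping about prefix closures and about the preservation of the redex and contractum patterns.
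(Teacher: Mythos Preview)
Your proof is correct and follows essentially the same approach as the paper: both claims are handled via Lemma~\ref{le:singlepos}, and in Claim~2 you show that the redex pattern survives in $t_0^\bullet$, invoke left-linearity to fire $r$ on any concretization, and then verify $t_1^\bullet\propto t_1'$. The only real difference is one of detail: the paper packages the final verification by writing $t_0^\bullet=C^\bullet[\lambda\sigma^\bullet]$ and $t_1^\bullet=C^\bullet[\rho^\bullet\sigma^\bullet]$ and concluding in one line, whereas you carry out an explicit position-by-position case split into context, contractum-pattern, and substitution positions --- your case~(iii) in particular spells out, via step~$s_3$ of Definition~\ref{def:labelled-step}, exactly why the sliced substitution is respected, which is precisely the step the paper leaves implicit.
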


\begin{proof}
We proof the two claims separately.

\noindent{\textbf{Claim 1.}} Let $P_1\subseteq\pos(C)$. 
Then, by Lemma \ref{le:singlepos} (Claim $1$), for any $w\in P_1$, $\lhd_\mu^L w=\{v\in\pos(C)\mid w=v.v'\}$.
Additionally, by Definition \ref{def:rlvSym}, $P_0= \bigcup_{w\in P_1}( \lhd_\mu^L w)$, and hence $P_0= \bigcup_{w\in P_1}\{v\in\pos(C)\mid w=v.v'\}$.
Therefore, it holds that  
(i) $P_1\subseteq P_0\subseteq\pos(C)$, and for any $v\in P_0\setminus P_1$, there exists a position $v'$ such that $w=v.v'$ for some $w\in P_1$;
(ii) by Definition \ref{def:termSlice}, the function $slice(t,P)$ delivers a term slice $t^\bullet$ where all the symbols of $t$ that do not occur in the path connecting the root position of $t$ with some position $w\in P$ are abstracted by the $\bullet$ symbol. 
Now, since $t_0^\bullet=slice(t_0,P_0)$ and $t_1^\bullet=slice(t_1,P_1)$, 
by (i) and (ii), we can conclude that
$\lambda\sigma$ and $\rho\sigma$ are abstracted by $\bullet$, and the context $C$ is abstracted by the term slice $C^\bullet$ in both $t_0$ and $t_1$. Hence,   $t_0^\bullet=C^\bullet[\bullet]=t_1^\bullet$.

\noindent{\textbf{Claim 2.}}  We assume $P_1\cap\{w|w=v.v'$, $C_{|v}=\Box$,  and $v'\in \pos(\rho\sigma)\}\neq\emptyset$. 
Then, there exists a position $w\in P_1$ such that $w\in \{w|w=v.v'$, $C_{|v}=\Box$, and $v'\in \pos(\rho)\}$. 
By Lemma \ref{le:singlepos} (Claim $2$), it follows that $\lhd_\mu^L w\supseteq \{v.v'\in\pos(t_0)\mid v'\in\pos(\lambda)\}$. 
By Definition \ref{def:rlvSym}, $P_0= \bigcup_{w\in P_1}(\lhd_\mu^L w)$, and hence $P_0\supseteq \{v.v'\in\pos(t_0)\mid v'\in\pos(\lambda)\}$.
Now, by Definition \ref{def:termSlice} and the fact that $P_0\supseteq \{v.v'\in\pos(t_0)\mid v'\in\pos(\lambda)\}$, the redex pattern of the rule $r$ is embedded into $t_0^\bullet=slice(t_0,P_0)$. 
In other words,  $t_0^\bullet=C^\bullet[\lambda\sigma^\bullet]$, where $C^\bullet$ is a term slice for the context $C$, and $\sigma^\bullet$ represents the term slices for the terms introduced by the substitution $\sigma$.  
Thus, by Lemma \ref{le:instance}, any concretization $t_0'$ of $t_0^\bullet$ has the form $ t_0'=C'[\lambda\sigma'] $, where $ C^\bullet \propto C'$ and  for each $x/t\in \sigma'$, there exists $x/t^\bullet\in \sigma^\bullet$  such that $ t^\bullet \propto t$. 
Note also that $t_0^\bullet$ embeds the redex pattern $\lambda^\Box$ of $r$. 
Furthermore, since $r$ belongs to the elementary rewrite theory $\cR$, $r$ is left-linear. 
Thus, the following rewrite step $t_0' \stackrel{r,\sigma'}{\raw} t'_1$ can be executed for any substitution $\sigma'$.
The rewrite step $t_0' \stackrel{r,\sigma'}{\raw} t'_1$ can be decomposed as follows:  $t_0'=C'[\lambda\sigma'] \stackrel{r,\sigma'}{\raw} C'[\rho\sigma']$, for some context $C'$ and substitution $\sigma'$. 
Moreover, by definition of rewrite step, $t'_1$ embeds the contractum pattern of $r$.
Finally, $t_1^\bullet=C^\bullet[\rho^\bullet\sigma^\bullet]$,  and thus $t_1'$ is a concretization of $t_1^\bullet$. 
\qed
\end{proof}

The following proposition allows the soundness of our methodology to be proved for one-step traces on an elementary rewrite theory.

\begin{proposition}\label{pr:onestep}
Let $\cR$ be an elementary rewrite theory. 
Let $\cT$ be an  execution trace in $\cR$, and let $\cO$ be a slicing criterion for $\cT$.
Let $\cT^\bullet : t_0^\bullet \stackrel{r_1}{\rightarrow} t_1^\bullet$ be the trace slice w.r.t.\ $\cO$ of $\cT$.
Then, for any concretization  $t_0'$ of $t_0^\bullet$, 
it holds that $\cT':t_0' \stackrel{r_1}{\rightarrow} t_1'$
is an  execution trace in $\cR$ such that
$t_1^\bullet \propto t_1'$.
\end{proposition}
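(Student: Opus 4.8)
The plan is to reduce the whole argument to the two claims of Lemma~\ref{le:relevant}, which already isolate the two possible behaviours of a single sliced rewrite step. First I would fix the decomposition of the step underlying $\cT$: since $\cT$ consists of the single step $\mu:t_0\stackrel{r_1,\sigma}{\rightarrow}t_1$, there are a context $C$ and a position $q$ with $t_0=C[\lambda\sigma]_q$ and $t_1=C[\rho\sigma]_q$, where $r_1:\lambda\raw\rho$. For a one-step trace the sequence of relevant position sets of Definition~\ref{def:rlvSym} is just $[P_0,P_1]$ with $P_1=\cO$ and $P_0=\bigcup_{p\in P_1}\lhd_\mu^{L}p$ for a labeling $L$ of $\mu$, so that $t_0^\bullet=slice(t_0,P_0)$ and $t_1^\bullet=slice(t_1,P_1)$.

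The key step is to observe that the hypothesis forces us into the second case of Lemma~\ref{le:relevant}. Indeed, by Definition~\ref{def:bak-dyn-tech} a sliced step $t_0^\bullet\stackrel{r_1}{\rightarrow}t_1^\bullet$ belongs to the trace slice only when $t_0^\bullet\neq t_1^\bullet$; hence asserting that $\cT^\bullet:t_0^\bullet\stackrel{r_1}{\rightarrow}t_1^\bullet$ \emph{is} the trace slice tacitly assumes $t_0^\bullet\neq t_1^\bullet$. Taking the contrapositive of Claim~1 of Lemma~\ref{le:relevant}, $t_0^\bullet\neq t_1^\bullet$ implies $P_1\not\subseteq\pos(C)$. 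Since every position of $t_1$ lying outside $\pos(C)$ is necessarily of the form $v.v'$ with $C_{|v}=\Box$ and $v'\in\pos(\rho\sigma)$ (these being exactly the positions involved in $\mu$), I conclude that $P_1\cap\{w\mid w=v.v',\,C_{|v}=\Box,\,v'\in\pos(\rho\sigma)\}\neq\emptyset$, which is precisely the premise of Claim~2.

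Finally I would invoke Claim~2 of Lemma~\ref{le:relevant} directly: it guarantees that for any concretization $t_0'$ of $t_0^\bullet$ there is a rewrite step $t_0'\stackrel{r_1,\sigma'}{\rightarrow}t_1'$ with $t_1^\bullet\propto t_1'$, which is exactly the conclusion of the proposition. The real content of the argument therefore lives in Lemma~\ref{le:relevant}, whose Claim~2 crucially uses that $r_1$ is left-linear and non-collapsing (properties of elementary theories) to ensure that the redex pattern of $r_1$ survives in $t_0^\bullet$ and can hence be re-matched in every concretization. The only genuine obstacle at the level of this proposition is the bookkeeping of the case split: recognising that a \emph{non-trivial} sliced step cannot have all its relevant positions confined to the context, and so must expose a position in the contractum pattern. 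Once this is settled the proposition follows with no further computation, and it serves as the base step for the induction on the length of $\cT$ that establishes Theorem~\ref{prop:reproduded}.
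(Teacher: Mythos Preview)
Your proposal is correct and follows essentially the same approach as the paper's proof: set up the one-step decomposition with its relevant position sets $[P_0,P_1]$, use $t_0^\bullet\neq t_1^\bullet$ (which holds by Definition~\ref{def:bak-dyn-tech}) together with the contrapositive of Claim~1 of Lemma~\ref{le:relevant} to get $P_1\not\subseteq\pos(C)$, deduce that some relevant position falls inside the contractum, and then apply Claim~2 of Lemma~\ref{le:relevant}. Your write-up is in fact slightly more explicit than the paper's in justifying why a position outside $\pos(C)$ must have the form required by Claim~2, but the structure and the key lemma invoked are the same.
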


\begin{proof}
Given the trace slice $\cT^\bullet : t_0^\bullet \stackrel{r_1}{\rightarrow} t_1^\bullet$ w.r.t.\ $\cO$ of $\cT$,
let $[P_0,P_1]$ be the sequence of the relevant position sets of $\cT$ w.r.t.\ $\cO$.
We have (i) $t_0^\bullet=slice(s_0,P_0)$ and $t_1^\bullet=slice(s_1,P_1)$, where  $s_0\stackrel{r_1}{\rightarrow} s_1$ is a rewrite step occurring in $\cT$; 
(ii) $t_0^\bullet\neq t_1^\bullet$.  
Let $r_1$ be the  rule $\lambda\raw \rho$.
The rewrite step $s_0\stackrel{r_1}{\rightarrow} s_1$ can be decomposed as follows:  $s_0=C[\lambda\sigma]\stackrel{r_1}{\rightarrow} C[\rho\sigma]=s_1$, for some context $C$ and substitution $\sigma$. 

Since $\cR$ is elementary and $t_0^\bullet\neq t_1^\bullet$,  by Claim $1$ of Lemma \ref{le:relevant}, $P_1\not\subseteq\pos(C)$. Hence, there exists a position $w\in P_1$ such that $w=v.v'$ and $v'\in\pos(\rho\sigma)$. 
Also, because $\cR$ is elementary, we can apply   Claim $2$ of Lemma~\ref{le:relevant}, and for any concretization $t_0'$ of $t_0^\bullet$, we get $t_0'\stackrel{r_1}{\rightarrow} t_1'$ such that $t_1'$ is a concretization of $t_1^\bullet$.
\qed
\end{proof}

\noindent{\textbf{Theorem \ref{prop:reproduded}.}} \textit{(soundness)}
\textit{
Let $\cR$ be an elementary rewrite theory. Let $\cT$ be an  execution trace in $\cR$ and let $\cO$  be a slicing criterion for $\cT$.
Let $\cT^\bullet : t_0^\bullet \stackrel{r_1}{\rightarrow} t_1^\bullet  \ldots \stackrel{r_n}{\rightarrow} t_n^\bullet $ be the corresponding  trace slice w.r.t.\ $\cO$. 
Then, for any concretization  
$t_0'$ of $t_0^\bullet$, 
it holds that $\cT':t_0' \stackrel{r_1}{\rightarrow} t_1' \ldots \stackrel{r_n}{\rightarrow} t_n'$
is an  execution trace in $\cR$, and   $t_i^\bullet \propto t_i'$, for $i=1,\ldots,n$.
}

\begin{proof}
The proof proceeds by induction on the length of the trace slice $\cT^\bullet$ and exploits Proposition \ref{pr:onestep} to prove the inductive case.  
Routine.
\qed
\end{proof}

\subsection*{Proof of Theorem 2}
\label{app:proof-extended}

In oder to prove Theorem~2, we use the same proof scheme as for elementary rewrite theories, since the extended technique described in Section~\ref{sec:slicing-m-e} is only concerned with suitable extensions of the labeling procedure given in Definition~\ref{def:labelled-step}, which do not affect the overall backward trace slicing methodology.

Let us start by proving an extension of Lemma~\ref{le:singlepos} (Claim 2), which holds for nonleft-linear as well as collapsing rules.

\begin{lemma}\label{le:singlepos-ext}
Let $r:\lambda \raw \rho$ be a rule that is either nonleft-linear or collapsing.
Let $\mu:s\stackrel{r,\sigma}{\raw} t$ be a rewrite step such that $s=C[\lambda\sigma]$ and $t=C[\rho\sigma]$, where $\sigma$ is a substitution
and $C$ is a context.
Let $L$ be a labeling for the rewrite step $\mu$, and $w\in \pos(t)$. Then,

\begin{enumerate}
\item  if $w\in\pos(C)$, then $\lhd_\mu^L w=\{v\in\pos(C)\mid w=v.v'\}$
\item if $w=w'.w''$,  $C_{|w'} = \Box$, and $w''\in\pos(\rho\sigma)$, then $\lhd_\mu^L w\supseteq \{w'.v'\in\pos(s)\mid v'\in\pos(\lambda)\}$
\end{enumerate}

\end{lemma}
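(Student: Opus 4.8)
The plan is to mirror the proof of Lemma~\ref{le:singlepos} almost verbatim, exploiting the fact that the collapsing and nonleft-linear extensions of Definitions~\ref{def:collapsext} and~\ref{def:nllext} touch only the \emph{labeling}, and never the origin relation of Definition~\ref{def:tracing} itself. The single structural observation that drives everything is a \emph{monotonicity} property of these extensions: steps $s_4$ and $s_5$ only \emph{enlarge} the label of the contractum-image root (the symbol at the rewrite position $q$, where $C_{|q}=\Box$), by adjoining the atomic labels $l_c$ of the redex pattern (step $s_4$) or the labels $l_{x_j}$ of the arguments bound to a repeated variable (step $s_5$); they leave the context labeling $L_C$ and all substitution labels fixed on their original positions, and they never delete an atomic label. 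Consequently, every containment $l_v\subseteq l_p$ that held in the base setting still holds, so the origin set of Definition~\ref{def:tracing} can only grow under the extended labeling. First I would record this fact, together with the elementary remark that, because $q$ is the hole of $C$, a position $w\in\pos(C)$ never satisfies $q\leq w$.

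For Claim~1 this remark settles everything. If $w\in\pos(C)$ then $q\not\leq w$, hence the contractum-image root at $q$ does \emph{not} lie on the path from the root of $t$ to $w$, and the augmentations performed by $s_4$ and $s_5$ are therefore invisible to the computation of $\lhd_\mu^L w$. The path leading to $w$ consists entirely of context symbols carrying their $L_C$ labels, which are disjoint from the redex and substitution labels; the argument of Lemma~\ref{le:singlepos} (Claim~1) then applies unchanged and yields the equality $\lhd_\mu^L w=\{v\in\pos(C)\mid w=v.v'\}$. In particular the enlargement of labels at $q$ cannot introduce spurious origins here, so the identity stays tight.

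For Claim~2 I would use the opposite side of the same observation. Now $w=w'.w''$ with $C_{|w'}=\Box$, i.e.\ $w'=q$ and $q\leq w$, so the contractum-image root at $q$ \emph{does} lie on the path to $w$. The key is to show that the label attached to this root contains every atomic label of the redex pattern. I would split on the shape of $r$: if $r$ is nonleft-linear but non-collapsing, Definition~\ref{def:ruleLabel} already attaches the joined label $l$ of the redex pattern to the contractum root, and step $s_5$ only enlarges it; if $r$ is collapsing (so its base contractum is the empty context $\Box$ and Definition~\ref{def:ruleLabel} is unproductive), step $s_4$ of Definition~\ref{def:collapsext} attaches $l_c$---the join of all redex-pattern labels---to the root of $t_i=\rho\sigma$ at $q$. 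In either case, for every redex-pattern position $w'.v'$ with $v'\in\pos(\lambda^\Box)$, the atomic label sitting at $w'.v'$ in $s$ is contained in the label at position $q$, which satisfies $q\leq w$; hence $w'.v'\in\lhd_\mu^L w$ by Definition~\ref{def:tracing}. This delivers the required inclusion $\lhd_\mu^L w\supseteq\{w'.v'\in\pos(s)\mid v'\in\pos(\lambda)\}$.

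The main obstacle I anticipate is the case in which $r$ is \emph{both} collapsing and nonleft-linear, where the note following Definition~\ref{def:nllext} prescribes applying $s_4$ first and then $s_5$ on its output. Here I would check that the composition behaves as expected---that the root targeted by $s_5$ is exactly the contractum image at $q$ already re-labeled by $s_4$ (as in the worked example $h^\alpha(f^\beta(a^\gamma,b^\delta,a^\gamma),b^\epsilon)\raw h^\alpha(b^{\beta\gamma\delta},b^\epsilon)$), so that the final label at $q$ still contains the full join $l_c$ of redex-pattern labels regardless of the order of augmentation. Once this bookkeeping is confirmed, the monotonicity property guarantees that the base-case containments survive, Claim~2's inclusion is preserved (indeed possibly widened by the $l_{x_j}$ contributions), and Claim~1's equality remains tight because those extra labels live only at $q$, which the context paths avoid.
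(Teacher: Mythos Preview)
Your proposal is correct and follows essentially the same approach as the paper: both reduce Claim~1 to the context-only argument of Lemma~\ref{le:singlepos}, and both handle Claim~2 by a case split (collapsing, nonleft-linear, both) whose point in every branch is that the label sitting at the contractum-image root $q$ contains the full join of the redex-pattern labels. Your monotonicity framing and the explicit remark that $q\not\leq w$ for $w\in\pos(C)$ make the preservation of Claim~1's \emph{equality} more transparent than the paper's one-line ``identical to Lemma~\ref{le:singlepos}'', but the underlying argument is the same.
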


\begin{proof}
We prove the two claims separately.

\noindent{\bf Claim 1.} The proof is identical to the proof of Claim 1 of  Lemma~\ref{le:singlepos}.

\noindent{\bf Claim 2.}
To prove the lemma, we distinguish three cases.

\begin{description}
\item[{\bf Case 1: Rule $r$ is collapsing.}]

Given the collapsing rule $r=\lambda\raw \rho$ where $\rho = x$ with  $x\in\Var(\lambda)$, let us consider  the term $t_i$  introduced by the substitution $\sigma$ via the binding $x/t_i$, and 
we have $\mu=C[\lambda\sigma] \stackrel{r}{\rightarrow} C[t_i]$.
Let us also consider the labeled rewrite step $\mu^L:s^L\stackrel{\hspace{2mm}r^{L_r},\sigma^{L_\sigma}}{\raw} t^L$ via the labeling $L$.
By Definition~\ref{def:labelled-step}, we have  $s^L=C^{L_C}[\lambda^{L_r}\sigma^{L_\sigma}]$ and $t^L=C^{L_C}[t_i^{L_\sigma}]$.

Let $f^{l'}$ be the labeled root symbol of $t_i^{L_\sigma}$.
By Definition~\ref{def:collapsext} (Step $s_4$), we have that $l' = l_\lambda l_i$, where $l_\lambda$ is formed by joining all the labels appearing in the redex pattern $\lambda^{L_r}$ and $l_i$ is the label of the root of the labeled term $t_i^{L_\sigma}$. 
This implies that, for each labeled symbol $g^l$ in the redex pattern of $r$, we have that $l \subseteq l'$. 
Furthermore, by hypothesis, we have that $w\in C[t_i]$ and  $w'' \in Pos(t_i)$. 
Hence, by Definition~\ref{def:tracing}, the inclusion $\lhd_\mu^L w\supseteq \{v.v'\in\pos(s)\mid v'\in\pos(\lambda)\}$ trivially holds.

\item[{\bf Case 2: rule $r$ is nonleft-linear.}]
Given the nonleft-linear rule $r$, the proof is perfectly analogous to the proof of Lemma~\ref{le:singlepos} since, by Definition~\ref{def:nllext} (Step $s_5$), the label of each symbol in the contractum pattern of the rule $r$  includes all the labels appearing in the redex pattern of $r$.   

\item[{\bf Case 3: rule $r$ is collapsing and nonleft-linear.}] Since $r$ is both collapsing and nonleft-linear, $\mu$ is labelled  according to Definition~\ref{def:collapsext} (Step $s_4$) and Definition~\ref{def:nllext} (Step $s_5$). Therefore, we can prove the claim by simply combining the arguments used to prove Case $1$ ad Case $2$. 
\end{description}
\qed
\end{proof}

The following Lemma extends Lemma~\ref{le:relevant} to deal with collapsing and nonleft-linear rules.

\begin{lemma}\label{le:relevant-extended}
Let $r:\lambda\raw \rho$ be a rule which is either left-linear or collapsing.
Let $\mu:t_0\stackrel{r,\sigma}{\raw} t_1$ be a rewrite step such that $t_0=C[\lambda\sigma]$ and $t_1=C[\rho\sigma]$, where $\sigma$ is a substitution and $C$ is a context.
Let $L$ be a labeling for the rewrite step $\mu$, and $[P_0,P_1]$ be the sequence of  the relevant position sets for 
$\mu:t_0\stackrel{r,\sigma}{\raw} t_1$ w.r.t. the slicing criterion $\cO$. Let $t_0^\bullet=slice(t_0,P_0)$, and  $t_1^\bullet=slice(t_1,P_1)$. Then,
\begin{enumerate}
\item if $P_1\subseteq\pos(C)$ then $t_0^\bullet=t_1^\bullet$.
\item if $P_1\cap\{w|w=v.v'$, $C_{|v}=\Box$, and $v'\in \pos(\rho\sigma)\}\neq\emptyset$, then for any concretization $t_0'$ of $t_0^\bullet$, we have that $t'_0\stackrel{r,\sigma'}{\raw} t'_1$ where  
$t_1^\bullet \propto t_1'$.
\end{enumerate}
\end{lemma}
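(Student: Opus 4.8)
The plan is to replay the proof of Lemma~\ref{le:relevant} almost verbatim, replacing every appeal to Lemma~\ref{le:singlepos} by the corresponding appeal to its extension, Lemma~\ref{le:singlepos-ext}. The only facts about the labeling that the argument of Lemma~\ref{le:relevant} ever used are the two claims of Lemma~\ref{le:singlepos}, and Lemma~\ref{le:singlepos-ext} reestablishes exactly those two statements for collapsing and nonleft-linear rules; thus the backbone of the proof is unchanged. Accordingly, I would split the proof into the two claims and, for Claim~2, distinguish the same three cases as in Lemma~\ref{le:singlepos-ext}: $r$ collapsing, $r$ nonleft-linear, and $r$ both. The purely left-linear non-collapsing case is already subsumed by Lemma~\ref{le:relevant}.

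Claim~1 carries over unchanged. Since Claim~1 of Lemma~\ref{le:singlepos-ext} is literally identical to Claim~1 of Lemma~\ref{le:singlepos}, the hypothesis $P_1 \subseteq \pos(C)$ yields, via Definition~\ref{def:rlvSym}, that $P_0 = \bigcup_{w \in P_1} \lhd_\mu^L w \subseteq \pos(C)$ and that every element of $P_0 \setminus P_1$ lies on a path to some $w \in P_1$. Hence by Definition~\ref{def:termSlice} the whole redex/contractum is abstracted to $\bullet$ in both slices while the surrounding context is abstracted identically, so $t_0^\bullet = C^\bullet[\bullet] = t_1^\bullet$.

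For Claim~2 I would first apply Lemma~\ref{le:singlepos-ext} (Claim~2) to the chosen $w \in P_1$ with $w = v.v'$, $C_{|v} = \Box$, $v' \in \pos(\rho\sigma)$: this forces $\lhd_\mu^L w \supseteq \{v.v' \in \pos(t_0) \mid v' \in \pos(\lambda)\}$, so by Definition~\ref{def:rlvSym} the redex pattern $\lambda^\Box$ is embedded in $t_0^\bullet = slice(t_0,P_0)$, and we may write $t_0^\bullet = C^\bullet[\lambda\sigma^\bullet]$. By Lemma~\ref{le:instance} any concretization $t_0'$ then has the shape $C'[\lambda\sigma']$. For a collapsing rule (with $\rho = x_i$) that is left-linear, matching imposes no repeated-variable constraint, so $r$ reduces $t_0'$ to $C'[t_i'] = t_1'$, where $t_i'$ concretizes the preserved binding of the collapse variable; consistency of the shared substitution labels across $t_0$ and $t_1$ then gives $t_1^\bullet \propto t_1'$. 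The nonleft-linear case is the delicate one: here I would invoke step~$s_5$ of Definition~\ref{def:nllext}, which prepends to the root label of the contractum the label $l_{x_j}$ obtained by joining all of $\mathit{Cod}(L_{x_j/t})$. Because step~$s_3$ of Definition~\ref{def:labelled-step} labels \emph{every} occurrence of the binding $t$ of a repeated variable $x_j$ with the \emph{same} labeling $L_{x_j/t}$, each atomic label in $l_{x_j}$ occurs at \emph{all} copies of $t$ in $t_0$; since the contractum root lies on the path to $w$ and carries $l_{x_j}$, Definition~\ref{def:tracing} traces every such position back into $P_0$. Consequently all copies of $t$ are fully preserved (no $\bullet$) in $t_0^\bullet$, so in every concretization $t_0'$ these copies remain syntactically identical, the nonleft-linear match succeeds, and $r$ applies; the simultaneously collapsing and nonleft-linear case is then obtained by composing the two arguments, exactly as Lemma~\ref{le:singlepos-ext} composes its Cases~1 and~2.

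I expect the nonleft-linear subcase to be the main obstacle, for a substantive reason: soundness genuinely fails for naive slicing of nonleft-linear rules (the slice $f(\bullet,\bullet,\bullet)$ admits the concretization $f(a,b,b)$, which does not match $f(x,y,x)$). The crux of the whole proof is therefore to argue that the labeling of Definition~\ref{def:nllext} forces \emph{both} occurrences of each repeated-variable binding into $P_0$, and hence into the slice, so that they are necessarily instantiated to equal terms in any concretization. This hinges on the design decision that identical bindings receive identical labels (Definition~\ref{def:labelled-step}, step~$s_3$) together with the root-label enrichment of step~$s_5$, and I would make that dependency explicit rather than treat it as routine.
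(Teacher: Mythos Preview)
Your proposal is correct and follows essentially the same approach as the paper: both proofs replay Lemma~\ref{le:relevant} with Lemma~\ref{le:singlepos-ext} substituted for Lemma~\ref{le:singlepos}, split Claim~2 into the collapsing, nonleft-linear, and combined cases, and in the nonleft-linear case appeal to step~$s_5$ of Definition~\ref{def:nllext} to argue that every copy of a repeated-variable binding is preserved in $t_0^\bullet$. Your write-up is in fact more explicit than the paper's about \emph{why} step~$s_5$ together with step~$s_3$ forces all copies into $P_0$ (same labeling on all occurrences, root enrichment by $l_{x_j}$, Definition~\ref{def:tracing} then traces each atomic sublabel back), which is exactly the crux you identify; the paper records this only as ``by Lemma~\ref{le:singlepos-ext} and Definition~\ref{def:nllext} (Step~$s_5$)''.
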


\begin{proof}
We proof the two claims separately.\\

\noindent{\bf Claim 1.} The proof is identical to the proof of Claim 1 of  Lemma~\ref{le:relevant}.

\noindent{\bf Claim 2.}
To prove the lemma, we distinguish three cases.

\begin{description}
\item[{\bf Case 1: rule $r$ is collapsing.}]
Given the collapsing rule $r$, the proof is perfectly analogous to the one of Lemma \ref{le:relevant} Claim 2. By using Lemma \ref{le:singlepos-ext} instead of Lemma~\ref{le:singlepos}, we 
are still able to prove that the redex pattern of $r$ embedded in $t_0$ is also embedded in $t_0^\bullet$, and hence  for any concretization $ t_0'$ of  $t_0^\bullet$, the rewrite step $t'_0\stackrel{r,\sigma'}{\raw} t'_1$ can be proved.
Finally, by using the same argument of  Lemma \ref{le:relevant} Claim 2, we conclude that $t_1^\bullet \propto t_1'$.

\item[{\bf Case 2: rule $r$ is nonleft-linear.}]
Given the nonleft-linear rule $r$, the proof is similar to the one of Lemma~\ref{le:relevant}. By exploiting Lemma \ref{le:singlepos-ext} and  Definition~\ref{def:nllext} (Step $s_5$), we can show that (i) the redex pattern of $r$ embedded in $t_0$ is also embedded in $t_0^\bullet$, and (ii) for each term $t$ introduced in $t_0$
by a binding $x/t\in\sigma$ such that $x$ occurs multiple times  in $\lambda$, $t$ is preserved in $t_0^\bullet$ (i.e., $t$ is not abstracted by $\bullet$ in $t_0^\bullet$).
By (i) and (ii), it is immediate to prove that, for any concretization $t_0'$ of $t_0^\bullet$, the rewrite step $t'_0\stackrel{r,\sigma'}{\raw} t'_1$ can be proved. Finally,  by using the same argument of Lemma \ref{le:relevant} Claim 2, we can show that $t_1^\bullet \propto  t_1'$.

\item[{\bf Case 3: rule $r$ is collapsing and nonleft-linear.}] Firstly we observe that, as the rule $r$ is collapsing, by Lemma~\ref{le:singlepos-ext} 
the redex pattern of $r$ embedded in $t_0$ is also embedded in $t_0^\bullet$, and hence  for any concretization $t_0'$ of  $t_0^\bullet$,
the redex pattern of $r$ is embedded in $t_0'$ as well. Secondly, since $r$ is nonleft-linear, by Lemma \ref{le:singlepos-ext} and  Definition~\ref{def:nllext} (Step $s_5$), for each term $t$ introduced in $t_0$
by a binding $x/t\in\sigma$ such that $x$ occurs multiple times  in $\lambda$, $t$ is preserved in $t_0^\bullet$. Hence, $t$ is also embedded in $t_0'$, for any concretization $t_0'$ of $t_0^\bullet$.
From the two facts above, it directly follows that for any $t_0'$ such that $t_0^\bullet\propto t_0'$, the rewrite step $t'_0\stackrel{r,\sigma'}{\raw} t'_1$ can be proved. Finally,  by using the same argument of Lemma \ref{le:relevant} Claim 2, we can show that $t_1^\bullet \propto  t_1'$.
  
\end{description}	
\qed \end{proof}

The following proposition allows us to prove the soundness of our methodology for one-step traces on an extended rewrite theory. 

\begin{proposition}\label{pr:onestep-ext}
Let $\cR$ be an extended rewrite theory. Let $\cT: t_0 \stackrel{r_1}{\rightarrow} t_1$ be an  execution trace in $\cR$, and let $\cO$ be a slicing criterion for $\cT$.
Let $\cT^\bullet : t_0^\bullet \stackrel{r_1}{\rightarrow} t_1^\bullet$ be the trace slice w.r.t.\ $\cO$ of $\cT$.
Then, for any concretization  $t_0'$ of $t_0^\bullet$, 
it holds that $\cT':t_0' \stackrel{r_1}{\rightarrow} t_1'$
is an  execution trace in $\cR$ such that %
$t_1^\bullet \propto t_1'$.
\end{proposition}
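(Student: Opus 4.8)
The plan is to reuse the proof scheme of Proposition~\ref{pr:onestep}, splitting into cases according to the kind of atomic step that makes up the one-step trace $\cT:t_0\raw t_1$ in the extended theory $\cR$. After the expansion of a rewrite step modulo $E$ described in Section~\ref{sec:rewrite-m-e}, every such atomic step is of exactly one of three kinds: (a)~the application of a rule of $R$ or of an oriented equation of $\Delta$, which may now be collapsing and/or nonleft-linear; (b)~a call to a built-in operator; or (c)~an internal flat/unflat transformation induced by the AC axioms in $B$. I would first fix the relevant position sets $[P_0,P_1]$ determined by $\cO$, so that $t_0^\bullet=slice(t_0,P_0)$, $t_1^\bullet=slice(t_1,P_1)$ and $t_0^\bullet\neq t_1^\bullet$, and then treat the three kinds of step in turn. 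The guiding invariant, exactly as in the elementary case, is that the portion of $t_0$ needed to \emph{replay} the step is never abstracted away by slicing, so that it survives in $t_0^\bullet$ and hence in every concretization $t_0'$.

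Case (a) is handled by Lemmas~\ref{le:relevant} and~\ref{le:relevant-extended}, which between them cover left-linear, collapsing, and nonleft-linear rules. Writing the step as $t_0=C[\lambda\sigma]\raw C[\rho\sigma]=t_1$, the assumption $t_0^\bullet\neq t_1^\bullet$ excludes $P_1\subseteq\pos(C)$ by Claim~1 of the relevant lemma, so some relevant position of $P_1$ falls in the contractum region; Claim~2 then yields that the redex pattern of the rule---and, for a nonleft-linear rule, also the subterms bound to the repeated variables---is embedded in $t_0^\bullet$. Consequently every concretization $t_0'$ admits the step $t_0'\raw t_1'$ with $t_1^\bullet\propto t_1'$.

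For case (b), the step has the shape $C[op(u_1,\ldots,u_m)]\raw C[u']$, and I would argue directly from step $s_6$ of Definition~\ref{def:nbuilt-in}. Since every symbol occurrence of the output $u'$ carries a label obtained by joining the labels of \emph{all} the arguments $u_1,\ldots,u_m$, the backward tracing relation of Definition~\ref{def:tracing} maps any relevant position inside $u'$ back to the whole of $u_1,\ldots,u_m$. Hence $P_0$ contains every position of the subterm $op(u_1,\ldots,u_m)$, which is therefore preserved verbatim in $t_0^\bullet$. Any concretization $t_0'$ thus supplies exactly the same concrete arguments, the built-in operator evaluates to the same $u'$, and $t_1^\bullet\propto t_1'$ is immediate.

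The hard part will be case (c), the AC flat/unflat steps, where no redex/contractum pattern is available and the argument must instead rest on Definition~\ref{def:nAC} together with the stability of flat/unflat with respect to the lexicographic position ordering $\sqsubseteq$. The plan is to observe that a flat (resp.\ unflat) step neither introduces nor deletes symbols other than reorganizing nested occurrences of the AC operator $f$, and that the label of each surviving symbol is transported faithfully by the clauses of Definition~\ref{def:nAC}; stability then guarantees that equal symbols keep their relative $\sqsubseteq$-order, so the origin of each relevant position is uniquely determined. From this I would conclude that the preserved skeletons of $t_0^\bullet$ and $t_1^\bullet$ are related by the very same order-preserving flat/unflat transformation, so that any concretization $t_0'$ $B$-equals a term that flattens/unflattens to a concretization $t_1'$ of $t_1^\bullet$, giving $t_1^\bullet\propto t_1'$. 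The genuinely delicate point---and the main obstacle---is dealing with several occurrences of the same constant or subterm under a commutative $f$: here one must invoke the $\sqsubseteq$-stability assumption to discard the spurious unflattenings that would otherwise carry a concretization outside the congruence class of $t_1'$, precisely the pathology that Definition~\ref{def:nAC} was designed to prevent.
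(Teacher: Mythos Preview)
Your proposal is correct and follows essentially the same case analysis as the paper's own proof: the elementary/collapsing/nonleft-linear rule case via Lemmas~\ref{le:relevant} and~\ref{le:relevant-extended}, the built-in case via Definition~\ref{def:nbuilt-in}, and the AC case via Definition~\ref{def:nAC}. Your treatment of case~(c) is considerably more cautious than the paper's---which dispatches the flat/unflat transformations in two sentences by observing that the joined label on $f$ in $t'$ records all the reduced occurrences of $f$ in $t$---but the underlying argument is the same.
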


\begin{proof}
Consider the rewrite step  $\mu: t_0 \stackrel{r_1}{\rightarrow} t_1$. 
In the case when $r_1$ is  left-linear and non-collapsing (i.e., a rule belonging to an elementary rewrite theory), the proof is identical to the proof of Proposition~\ref{pr:onestep-ext}.  
Hence w.l.o.g. we assume that $r$ corresponds to  a
 collapsing or nonleft-linear rule,  built-in operator evaluation, or AC axiom.  

\begin{description}
\item[Nonleft-linear/collapsing rules.] In this case, the proof of Proposition~\ref{pr:onestep-ext} is analogous to the proof of Proposition \ref{pr:onestep}, by using Lemma~\ref{le:relevant-extended} in the place of Lemma~\ref{le:relevant}. 

\item[Built-in Operators.]
Let $t_0=C[op(t_1, \ldots ,t_m)]$ and $t_1=C[t']$. Hence, $\mu : C[op(t_1, \ldots ,t_m)] \raw C[t']$ is a rewrite step mimicking the evaluation of the built-in operator call $op(t_1, \ldots ,t_m)$.
By Definition~\ref{def:nbuilt-in} and Definition \ref{def:tracing}, it is immediate to show that $op(t_1, \ldots ,t_m)$ is embedded
in $t_0^\bullet$, and thus for any concretization  $t_0^\bullet \propto t_0'$, $t_0' \stackrel{r_1}{\rightarrow} t_1'$ and $t_1^\bullet \propto  t_1'$.

\item[\bf Associative-Commutative Axioms.]
Flat/unflat transformations are interpreted as rewrite steps that reduce AC symbols.
Let us first consider the flat transformation  $t\raw_{flat_B}t'$ that reduces the AC symbol $f$. 
By Definition~\ref{def:nAC}, the label of the occurrence of $f$ in  $t'$ contains all the labels of the different occurrences of $f$ appearing in $t$ that have been reduced by the transformation. In other words, the label of $f$ in $t'$ keeps track of all the occurrences of $f$ that have been reduced in $t$, and therefore the claim holds directly. 
The claim for unflat transformations can be proved in a similar way. 
\end{description}
\qed
\end{proof}

Finally, we exploit Proposition \ref{pr:onestep-ext} in order to prove the extended soundness of our methodology on extended rewrite theories.

\medskip

\noindent{\textbf{Theorem 2.}} \textit{(extended soundness)}
\textit{
Let $\cR = (\Sigma, E, R)$ be an extended rewrite theory. 
Let $\cT$ be an  execution trace in the rewrite theory $\cR$, and let $\cO$ be a slicing criterion for $\cT$.
Let $\cT^\bullet : t_0^\bullet \stackrel{r_1}{\rightarrow} t_1^\bullet  \ldots \stackrel{r_n}{\rightarrow} t_n^\bullet $ be the corresponding  trace slice w.r.t.\ $\cO$. 
Then, for any concretization  
$t_0'$ of $t_0^\bullet$, 
it holds that $\cT':t_0' \stackrel{r_1}{\rightarrow} t_1' \ldots \stackrel{r_n}{\rightarrow} t_n'$
is an  execution trace in $\cR$ and   $t_i^\bullet \propto t_i'$, for $i=1,\ldots,n$.
}

\begin{proof}
The proof proceeds by induction on the length of the trace slice $\cT^\bullet$ and exploits Proposition \ref{pr:onestep-ext} in order to prove the inductive case.  
 Routine.
\qed
\end{proof}

\end{document}